\keywords{Differential Privacy, Gaussian mechanism, Bounded mechanism, Multivariate distribution, Truncated distribution.} 
\theoremstyle{plain} 
\begin{document}

\title[Bounded Gaussian Mechanism]{The Bounded Gaussian Mechanism \\ for Differential Privacy}

\author[Bo Chen]{Bo Chen}	
\address{University of Florida, Gainesville, Florida, USA}	
\email{bo.chen@ufl.edu}  

\author[Matthew Hale]{Matthew Hale}	
\address{University of Florida, Gainesville, Florida, USA}	
\email{matthewhale@ufl.edu}  
\thanks{This work was supported by the NSF under CAREER Grant 1943275,
by AFOSR under grant FA9550-19-1-0169, and by ONR under
grant N00014-21-1-2502.}	





\begin{abstract}
  \noindent  The Gaussian mechanism is one differential privacy mechanism
  commonly used to protect numerical data. However, it may be ill-suited to
  some applications because it has unbounded support and thus can produce
  invalid numerical answers to queries, such as negative ages or
  human heights in the tens of meters. One can project such private values
  onto valid ranges of data, though such projections lead to the accumulation of
  private query responses at the boundaries of such ranges, thereby harming
  accuracy. Motivated by the need for both privacy and accuracy over bounded
  domains, we present a bounded Gaussian mechanism for differential privacy, which
  has support only on a given region. We present both univariate and multivariate
  versions of this mechanism and illustrate a significant reduction in variance
  relative to comparable existing work. 
\end{abstract}

\maketitle

\section*{Introduction}\label{S:one}
As many engineering applications have become increasingly reliant on user data, data privacy has become a concern that data aggregators and curators must take into consideration. In numerous applications, such as
healthcare~\cite{YANG20181437}, energy systems~\cite{Asghar2017}, transportation systems~\cite{Zhang2018} and Internet of Things (IoT)~\cite{Medaglia2010},
the data gathered to support system operation
often contains sensitive individual information. Differential privacy~\cite{Dwork2014} has emerged as a standard privacy framework
that can be used in such applications to protect sensitive data while allowing
privatized data to remain useful. 
Differential privacy is a statistical notion of privacy that provides privacy guarantees 
by adding carefully calibrated noise to sensitive data or functions of sensitive data. 
Its key features include: (i) it is robust to side information, in that any additional knowledge about data-producing entities does not weaken 
their privacy by much~\cite{Kasiviswanathan_Smith_2014}, and (ii) it is immune to post-processing, in that any transformation of private data stays private~\cite{Dwork2014}. 

Well-known mechanisms for the enforcement of differential privacy include 
the Laplace mechanism~\cite{Dwork2006}, the Gaussian mechanism~\cite{Dwork2014}, 
and the exponential mechanism~\cite{McSherry2007,McSherry2009}. Other mechanisms have 
been developed for specific applications, in some cases
by building upon or modifying these
well-known mechanisms. A representative sample
includes the Dirichlet mechanism on the unit
simplex~\cite{Gohari2021}, mechanisms for sensitive words
of symbolic data~\cite{chen2022}, the matrix-variate Gaussian
mechanism for matrix-valued queries~\cite{chanyaswad2018},
and the XOR mechanism for binary-valued data~\cite{ji2021}. 
As the use of differential privacy has grown, these and other mechanisms
have been needed to respond to privacy needs in settings with
constraints on allowable data. 


The Gaussian mechanism has been used for some classes of numerical data, although it may also require modifications for some types
of sensitive data because the Gaussian mechanism adds unbounded noise. For example, 
the Gaussian mechanism may generate 
negative values for data such as ages, salaries, and weights, and
such negative values are not meaningful in these contexts. 
One attempt to solve this problem is through 
projecting out-of-domain results back to the closest value in the given domain. 
Although this procedure does not weaken differential privacy because
it is only post-processing on private data, 
it has been observed to lead to low accuracy in applications~\cite{holohan2018bounded} and
thus is undesirable. Nonetheless, the Gaussian mechanism has been
used in numerous applications, including deep learning~\cite{Abadi2016}, \cite{Yu2019}, convex optimization~\cite{Bassily2014}, filtering and estimation problems~\cite{Le2014} and cloud control~\cite{Hale2018},
all of which can use data that is inherently bounded in some way. 

On the other hand, compared to the Laplace mechanism, which is another popular privacy mechanism designed for 
numerical data, the Gaussian mechanism is able to use lower-variance noise to provide the same privacy level for 
high-dimensional data. This is because the variance of privacy noise 
is an increasing function of the sensitivity of 
a given query, and the Gaussian mechanism allows the use of the~$L_2$ sensitivity which, in 
applications like 
federated learning~\cite{Wei2020} and deep learning~\cite{Abadi2016}, is much lower than the $L_1$ sensitivity used 
in Laplace mechanism.

Therefore, in this paper we develop a new type of Gaussian mechanism that 
accounts for the boundedness of semantically valid data in a given application. 
In the univariate case we consider data confined to a closed interval, and
in the multivariate case we consider data confined to a product of closed
intervals. For both cases we show that the bounded Gaussian
mechanism provides~$\epsilon$-differential privacy, rather
than~$(\epsilon, \delta)$-differential privacy as is provided by the
ordinary Gaussian mechanism. We also present algorithms for finding
the minimum variance necessary to enforce~$\epsilon$-differential privacy
using the bounded Gaussian mechanism. The mechanisms we develop do not rely on projections
and thus avoid the harms to accuracy that can accompany projection-based approaches. 

Besides, \cite{Balle2018} point out that the original Gaussian mechanism's variance bound is far from tight when the privacy parameter $\epsilon$ approaches either $0$ or $\infty$. This is because the original Gaussian mechanism calibrates noise with a worst-case bound which is only tight for a small range of $\epsilon$. In this work, we present 
bounded mechanisms that address this limitation by leveraging the boundedness of the domains we consider. 
This boundedness excludes the tail of each Gaussian distribution from our analysis and allows for the development of tight bounds
on the variance of noise required for all~$\epsilon$. 

Related work in~\cite{holohan2018bounded} developed a bounded Laplace mechanism,
and in this work we bring the ability to bound private data to the
Gaussian mechanism.  Developments in~\cite{Liu2019} include
a generalized Gaussian mechanism with bounded support, and we show
in Section~\ref{S:algo} that the mechanism developed in this paper
requires significantly lower variance of noise to attain
differential privacy and thus provides improved accuracy. 

\textbf{Notation} Let $\mathbb{N}$ denote the set of all positive integers and 
let $I$ denote the identity matrix.
We use non-bold letters to denote scalars, e.g., $a\in\mathbb{R}$, and we use
bold letters denote vectors, e.g., $\mathbf{b}\in\mathbb{R}^n$. 
All intervals of the form~$[a, b]$ are assumed to have~$a < b$. For~$\mathbf{a},\mathbf{b} \in \mathbb{R}^m$
we define~$[\mathbf{a}, \mathbf{b}] = [a_1, b_1] \times [a_2, b_2] \times \cdots \times [a_m, b_m]$.
We also use the function
\begin{equation*}
    \text{erf}(z) = \frac{2}{\sqrt{\pi}}\int_0^z e^{-t^2}dt.
\end{equation*}

\section{Preliminaries and Problem Statement}

This section gives background on differential privacy and problem statements that are the focus of the remainder of the paper.

\subsection{Differential Privacy Background}

Differential privacy is enforced by a \emph{mechanism}, which is a randomized map. 
We let~$S$ denote the space of sensitive data of interest. 
For nearby pieces of sensitive data, a mechanism must produce outputs that are approximately distinguishable. The definition of “nearby” is given by an adjacency relation.

\begin{defi}[Adjacency]
    Fix $k\in\mathbb{N}$. Two databases $d,d'\in S^n$ are adjacent if they differ in $k$ rows.
\end{defi} 

For numerical queries $Q:S^n\rightarrow \mathbb{R}^m$, $m \in \mathbb{N}$, the sensitivity of a query~$Q$, denoted~$\Delta Q$, is used to calibrate
the variance of privacy noise used to implement privacy. 
Throughout this paper we will use the~$\ell_2$-sensitivity.
\begin{defi}[$\ell_2$-Sensitivity]\label{def:sensitivity}
    The $\ell_2$-sensitivity of a query $Q:S^n\rightarrow\mathbb{R}^m$ is
    \begin{equation*}
        \Delta Q := \max_{\text{adjacent }d,d'}||Q(d)-Q(d')||_2.
    \end{equation*}
\end{defi}

The $\ell_2$ sensitivity captures the largest
magnitude by which the outputs of~$Q$ can change
across two adjacent input databases. We next introduce the definition of differential privacy itself.

\begin{defi}[Differential Privacy]
    Fix a probability space~$(\Omega, \mathfrak{F}, \mathbb{P})$, 
    an adjacency parameter~$k \in \mathbb{N}$, a privacy parameter $\epsilon>0$, and a query $Q:S^n\rightarrow\mathbb{R}^m$. A mechanism $M: \Omega \times S^n \to \mathbb{R}^m$ is $(\epsilon,\delta)$-differentially private if for all measurable sets $A\subseteq\mathbb{R}^m$ and for all adjacent databases $d,d'\in S^n$ it satisfies 
    $\mathbb{P}\big[M(Q(d))\in A\big] \leq e^\epsilon\mathbb{P}\big[M(Q(d'))\in A\big]+\delta$.
\end{defi}

The privacy parameter $\epsilon$ sets the strength of privacy protections, and a smaller $\epsilon$ implies stronger privacy. 
The parameter $\delta$ can be interpreted as a relaxation parameter, in the sense that
it is the probability that~$\epsilon$-differential privacy fails to hold~\cite{Beimel2013}
(though it may hold with a different value of~$\epsilon$). 
If $\delta=0$, then a mechanism is said to be $\epsilon$-differentially private.

In the literature, $\epsilon$ has ranged from $0.01$ to $10$~(\cite{Hsu_2014}). A differential privacy mechanism guarantees that the randomized outputs of 
two adjacent databases will be made approximately indistinguishable to any recipient of their privatized forms, including any eavesdroppers. 
One widely used differential privacy mechanism is the Gaussian mechanism. The standard Gaussian mechanism is given by the following theorem:

\begin{thm}[Standard Gaussian Mechanism]
    Let $\epsilon\in(0,1)$ and~$\delta \in (0, 1]$ be given. Fix an adjacency parameter~$k \in \mathbb{N}$. 
    For $c^2>2\ln{(1.25/\delta)}$ and a query $Q$ of the database $d$, the Gaussian mechanism $M_G(Q(d))=Q(d)+\eta$, $\eta \sim \mathcal{N}(\mathbf{0},\sigma^2I)$ with parameter $\sigma^2\geq c\Delta Q/\epsilon$ is $(\epsilon,\delta)$-differentially private.
\end{thm}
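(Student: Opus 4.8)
The plan is to argue through the privacy loss random variable and reduce the set-wise inequality in the definition of differential privacy to a one-dimensional Gaussian tail bound. First I would fix adjacent databases $d, d'$, set $v = Q(d) - Q(d')$ so that $\|v\|_2 \le \Delta Q$ by Definition~\ref{def:sensitivity}, and note that since the noise vector is $\eta \sim \mathcal{N}(\mathbf{0}, \sigma^2 I)$ each mechanism output has an explicit Gaussian density. I would then form the log-likelihood ratio
\[
    L(x) = \ln \frac{p_{M_G(Q(d))}(x)}{p_{M_G(Q(d'))}(x)},
\]
where $p_{M_G(Q(d))}$ denotes the density of $\mathcal{N}(Q(d), \sigma^2 I)$. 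Expanding the two densities, the quadratic-in-$x$ terms cancel, leaving $L$ affine in $x$.

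The key structural fact I would establish is that, when the output is drawn from $M_G(Q(d))$, the loss $L$ is itself Gaussian. Writing $x = Q(d) + \eta$ gives $L = \langle \eta, v\rangle/\sigma^2 + \|v\|_2^2/(2\sigma^2)$, and since $\langle \eta, v\rangle \sim \mathcal{N}(0, \sigma^2 \|v\|_2^2)$ we obtain $L \sim \mathcal{N}\big(\|v\|_2^2/(2\sigma^2),\, \|v\|_2^2/\sigma^2\big)$. This closed form is what makes the Gaussian mechanism tractable.

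Next I would run the standard splitting argument that converts a bound on $\mathbb{P}[L > \epsilon]$ into the $(\epsilon, \delta)$ guarantee. Let $B = \{x : L(x) > \epsilon\}$. On $B^c$ the density ratio is at most $e^\epsilon$, so $\mathbb{P}[M_G(Q(d)) \in A \cap B^c] \le e^\epsilon \mathbb{P}[M_G(Q(d')) \in A]$, while on $B$ we simply bound $\mathbb{P}[M_G(Q(d)) \in A \cap B] \le \mathbb{P}[L > \epsilon]$. Summing the two contributions yields precisely $\mathbb{P}[M_G(Q(d)) \in A] \le e^\epsilon \mathbb{P}[M_G(Q(d')) \in A] + \mathbb{P}[L > \epsilon]$, so it suffices to show $\mathbb{P}[L > \epsilon] \le \delta$.

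Finally I would verify the tail bound. Using the worst case $\|v\|_2 = \Delta Q$ and the threshold value $\sigma = c\,\Delta Q/\epsilon$ (larger $\sigma$ only decreases the loss), standardizing $L$ gives $\mathbb{P}[L > \epsilon] = \mathbb{P}\big[\mathcal{N}(0,1) > c - \epsilon/(2c)\big]$, and I would bound the right-hand side using a Gaussian tail estimate together with the hypothesis $c^2 > 2\ln(1.25/\delta)$. I expect this last step to be the main obstacle: the constant $1.25$ is not incidental but is forced by the precise Gaussian tail inequality $\mathbb{P}[\mathcal{N}(0,1) > t] \le \frac{1}{\sqrt{2\pi}\,t}e^{-t^2/2}$, and carrying it through requires controlling the correction term $\epsilon/(2c)$ and using the restriction $\epsilon \in (0,1)$ so that the resulting inequality holds for all admissible parameters rather than only in an asymptotic regime.
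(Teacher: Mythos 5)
Your proposal is correct, and it is essentially the argument the paper relies on: the paper does not prove this theorem itself but states it as background cited from Dwork and Roth (2014), whose Appendix A proof — privacy loss random variable, the splitting over the event $\{L > \epsilon\}$, and the Gaussian tail bound that produces the constant $1.25$ — is exactly what you have reconstructed (and is the same argument the paper invokes with ``As in~\cite{Dwork2014}'' inside its own proofs of Theorems~\ref{thm:main_result_1} and~\ref{thm:main_result_2}). Your identification of the final tail estimate, with the correction term $\epsilon/(2c)$ and the role of $\epsilon \in (0,1)$, as the only delicate step is accurate.
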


As discussed in the introduction, the Gaussian mechanism has been favored
over the Laplace mechanism in several applications, including deep learning,
empirical risk minimization, and various problems in control
theory and optimization. Simultaneously, these applications may have bounds
on the data they use, such as bounds on training data for a learning algorithm or
bounds on states in a control system, though the Gaussian mechanism has infinite
support and hence produces unbounded private outputs.

%

\subsection{Problem Statement}
In this work we seek a Gaussian
mechanism that respects given bounds on data, and we formalize the development
of this mechanism in the next two problem statements. 

\begin{prob}[Univariate bounded Gaussian mechanism]\label{prob:univ_bd_gaus}
    Given a query $Q:S^n\rightarrow D$, where $D=[a,b]\subset\mathbb{R}$ ($a<b$, both finite)
    is a constrained domain, and a privacy parameter~$\epsilon > 0$, 
    develop a mechanism $M_B: \Omega \times S^n \to D$
    that is an $\epsilon$-differentially private approximation of~$Q$ 
    and generates outputs in~$D$ with probability~$1$. 
\end{prob}

\begin{prob}[Multivariate bounded Gaussian mechanism]\label{prob:multi_bd_gaus}
    Given a query $Q:S^n\rightarrow D$, where $D=[\mathbf{a},\mathbf{b}]\subset\mathbb{R}^m$ 
    ($a_i<b_i$, both finite for all $i$)
    is a constrained domain, 
    and a privacy parameter~$\epsilon > 0$, 
    develop a mechanism $M_B: \Omega \times S^n \to D$    
    that is an $\epsilon$-differentially private approximation of~$Q$ 
    and generates outputs in~$D$ with probability~$1$. 
\end{prob}

We solve Problem~\ref{prob:univ_bd_gaus} in Section~\ref{sec:bounded_gaussian_mechanism},
and we solve Problem~\ref{prob:multi_bd_gaus} 
in Section~\ref{sec:multivariate_mechanism}. 



\section{Univariate Bounded Gaussian mechanism}\label{sec:bounded_gaussian_mechanism}
In this section, we develop the bounded Gaussian mechanism for a bounded domain $D=[a,b]\subset\mathbb{R}$. 
We now give a formal statement of the univariate bounded Gaussian mechanism, and this
will be the focus of the rest of this section: 

\begin{defi}[Univariate Bounded Gaussian Mechanism]\label{def:bounded_guassian}
    Fix~$\sigma > 0$. 
    Let a query~$Q : S^n \to D$ be given where $D=[a,b]\subset\mathbb{R}$, and suppose that~$d \in S^n$
    generates the output~$Q(d) = s \in D$. 
    Then the univariate bounded Gaussian mechanism $M_B:\Omega\times S^n\to D$
    is given by the probability density function
\begin{equation}
    p_B(x)=\begin{cases}
        \frac{1}{\sigma}\frac{\phi\left(\frac{x-s}{\sigma}\right)}{\Phi\left(\frac{b-s}{\sigma}\right)-\Phi\left(\frac{a-s}{\sigma}\right)} & \text{if } x\in D,\\
        0 & \text{otherwise,}
    \end{cases}
\end{equation}

where
\begin{align*}
    &\phi(x) = \frac{1}{\sqrt{2\pi}}\exp\left(-\frac{1}{2}x^2\right), \\
    &\Phi(x) = \frac{1}{2}\left(1+\text{erf}(x/\sqrt{2})\right). 
\end{align*}

\end{defi}

Since $Q(d)=s$, the density function $p_B(x)$ is not the same for each database, and data-dependent noise 
is known to be problematic in general for differential privacy~\cite{Nissim2007}.
Therefore using the parameters of the standard Gaussian mechanism 
is no longer guaranteed to satisfy differential privacy for the bounded Gaussian mechanism. We next provide the parameters 
required for the bounded Gaussian mechanism to provide differential privacy by first introducing some preliminary results.

\subsection{Preliminary Results}
\begin{lem}\label{lem:preli_result_1}
    Let~$\sigma>0$ be given. For a database $d\in S^n$    
    and a query~$Q$ with $Q(d)=s\in D$ and $D=[a,b]\subset \mathbb{R}$, let
    \begin{align}
        C(s,\sigma) = \frac{1}{\Phi\left(\frac{b-s}{\sigma}\right)-\Phi\left(\frac{a-s}{\sigma}\right)}.\label{eq:definition_C}
    \end{align}
    Let $d'\in S^n$ be adjacent to $d$ in the sense of Definition~\ref{def:sensitivity}, and suppose that $Q(d')=s'\in D$. Then we have
    \begin{align*}
        \max_{s,s'} \frac{C(s,\sigma)}{C(s',\sigma)} = \frac{C(a,\sigma)}{C(a+c,\sigma)},
    \end{align*}
    where $c=|s'-s|\leq \Delta Q$.
\end{lem}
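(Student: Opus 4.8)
The plan is to reduce the claim to the \emph{log-concavity} of the Gaussian mass that a mechanism centered at $s$ places on $[a,b]$. Write $P(s) = \Phi\!\left(\frac{b-s}{\sigma}\right) - \Phi\!\left(\frac{a-s}{\sigma}\right)$ for the denominator in~\eqref{eq:definition_C}, so that $C(s,\sigma) = 1/P(s)$ and $\frac{C(s,\sigma)}{C(s',\sigma)} = \frac{P(s')}{P(s)}$. Maximizing this ratio over $s,s'\in D$ with $|s-s'| = c$ is then equivalent to maximizing $\log P(s') - \log P(s)$, and since $\phi$ is even, $P$ is symmetric about the midpoint, i.e. $P(s) = P(a+b-s)$. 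This symmetry lets me assume $s' = s+c$ and reduce the problem to maximizing $h(s) := \log P(s+c) - \log P(s)$ over $s \in [a, b-c]$; the reflected configuration will return the same value at the end.

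First I would record the elementary structure of $P$. Differentiating and using that $\phi$ is even gives $P'(s) = \frac{1}{\sigma}\left[\phi\!\left(\frac{a-s}{\sigma}\right) - \phi\!\left(\frac{b-s}{\sigma}\right)\right]$, from which $P$ increases on $\left[a, \frac{a+b}{2}\right]$ and decreases on $\left[\frac{a+b}{2}, b\right]$; hence $P$ is unimodal with maximum at the center and minima at the endpoints. This already makes the endpoint pair $(s,s') = (a, a+c)$ plausible, but unimodality by itself does not determine the maximizer of a \emph{ratio} of values of $P$, so a stronger property is needed.

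The key step is to prove that $P$ is log-concave on $[a,b]$. I would write $P(s) = \int_{\mathbb{R}} \mathbf{1}_{[a,b]}(x)\,\frac{1}{\sigma}\phi\!\left(\frac{x-s}{\sigma}\right) dx$ and observe that the integrand is jointly log-concave in $(x,s)$: the indicator of the convex strip $[a,b]\times\mathbb{R}$ is log-concave, while $\log\phi\!\left(\frac{x-s}{\sigma}\right) = -\frac{1}{2\sigma^2}(x-s)^2 + \mathrm{const}$ is concave in $(x,s)$ because $(x-s)^2$ is a convex quadratic. By Pr\'ekopa's theorem, integrating a jointly log-concave function over one of its arguments produces a log-concave marginal, so $P$ is log-concave. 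If a self-contained argument is preferred, one can instead verify $P(s)P''(s) \le P'(s)^2$ directly using $\phi'(t) = -t\phi(t)$; this hand computation is the step I expect to be the main obstacle.

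Finally, log-concavity means $L(s) := (\log P)'(s) = P'(s)/P(s)$ is nonincreasing on $[a,b]$, so $h'(s) = L(s+c) - L(s) \le 0$ for all $s \in [a, b-c]$. Hence $h$ is nonincreasing and attains its maximum at $s = a$, giving the optimal pair $(s,s') = (a, a+c)$ with value $P(a+c)/P(a) = C(a,\sigma)/C(a+c,\sigma)$. The reflected pair $(b, b-c)$ yields the same value by $P(a) = P(b)$ and $P(a+c) = P(b-c)$, which confirms that this is the global maximum and proves the lemma.
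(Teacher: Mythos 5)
Your proof is correct, and it takes a genuinely different route from the paper's. Both arguments ultimately reduce to the same fact — that $s \mapsto P(s+c)/P(s)$ is nonincreasing on $[a,b-c]$, where $P(s)=\Phi\left(\frac{b-s}{\sigma}\right)-\Phi\left(\frac{a-s}{\sigma}\right)$ — but the paper establishes it by direct computation: it differentiates the ratio of integrals with the Leibniz integral rule, rewrites the shifted integral $\int_a^b \exp\left(-\frac{(x-s-c)^2}{2\sigma^2}\right)dx$ via the mean value theorem for integrals as $\exp\left(-\frac{c^2}{2\sigma^2}\right)\exp\left(\frac{(e-s)c}{\sigma^2}\right)A(s)$ with $e\in[a,b]$, and then checks that the derivative is a sum of two nonpositive terms that cannot vanish simultaneously. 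You replace this entire computation with a structural fact: $P$ is the marginal of the jointly log-concave function $\mathbf{1}_{[a,b]}(x)\,\phi\!\left(\frac{x-s}{\sigma}\right)$, hence log-concave by Pr\'ekopa's theorem, and log-concavity of $P$ is exactly the monotonicity of $P(s+c)/P(s)$ that is needed. Two further points in your favor: you actually justify the "by symmetry, take $s'=s+c$" reduction via $P(s)=P(a+b-s)$, which the paper asserts without proof, and your argument generalizes verbatim to the multivariate setting of Lemma~\ref{lem:preli_result_3} (indeed to any convex domain, not just a box), whereas the paper must redo the argument coordinate-by-coordinate using the product structure of $[\mathbf{a},\mathbf{b}]$. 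What the paper's approach buys in exchange is self-containedness — no appeal to Pr\'ekopa — and strict monotonicity of the ratio, though strictness is not needed for the stated maximum. Your instinct that the hand verification of $P(s)P''(s)\leq P'(s)^2$ would be the painful path is right; citing Pr\'ekopa (or, equivalently, closure of log-concavity under convolution, since $P$ is the convolution of $\mathbf{1}_{[a,b]}$ with a Gaussian density) is the clean way to discharge that step.
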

\begin{proof}
    See Appendix~\ref{apdx:proof_preli_result_1}.
\end{proof}

This leads to the following lemma.
\begin{lem}\label{lem:preli_result_2}
    Let $C(s,\sigma)$ be given in Equation~\eqref{eq:definition_C}. 
    Given two adjacent databases $d\in S^n$ and $d'\in S^n$ 
    and a query $Q:S^n\rightarrow D$, 
    suppose that 
    $Q(d)=s$ and $Q(d')=s'$ such that $s'=s+c$ and by the definition of adjacency we have $|c|\leq \Delta Q$.
    Then 
    \begin{equation*}
        \max_c \frac{C(a,\sigma)}{C(a+c,\sigma)}=:\Delta C(\sigma)=\begin{cases}
        \frac{C(a,\sigma)}{C(a+\Delta Q,\sigma)} & \text{if } \Delta Q\leq\frac{b-a}{2},\\[5pt]
        \frac{C(a,\sigma)}{C\left(\frac{b+a}{2},\sigma\right)} & \textnormal{otherwise.}
        \end{cases}
    \end{equation*}
\end{lem}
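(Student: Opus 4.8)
The plan is to reduce this to a single-variable optimization and then exploit the unimodality of the Gaussian density. Since $C(a,\sigma)$ does not depend on $c$, maximizing $C(a,\sigma)/C(a+c,\sigma)$ over the feasible range of $c$ is equivalent to minimizing $C(a+c,\sigma)$, i.e.\ to maximizing its reciprocal
\[
g(c) := \frac{1}{C(a+c,\sigma)} = \Phi\!\left(\frac{b-a-c}{\sigma}\right) - \Phi\!\left(\frac{-c}{\sigma}\right).
\]
It is worth noting that $g(c)$ is exactly the probability mass that $\mathcal{N}(a+c,\sigma^2)$ assigns to the interval $[a,b]$, so the problem becomes: how far into $[a,b]$ should the center $a+c$ be pushed to capture the most mass? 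Geometrically the answer is the midpoint, and the analytic argument below will confirm this.

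Next I would differentiate $g$ using $\Phi'=\phi$, which gives
\[
g'(c) = \frac{1}{\sigma}\left[\phi\!\left(\frac{c}{\sigma}\right) - \phi\!\left(\frac{b-a-c}{\sigma}\right)\right],
\]
where I have used that $\phi$ is even to rewrite $\phi(-c/\sigma)$ as $\phi(c/\sigma)$. Because $\phi$ is even and strictly decreasing on $[0,\infty)$, the equation $g'(c)=0$ holds if and only if $|c| = |b-a-c|$, whose unique solution in $[0,b-a]$ is $c=(b-a)/2$. The same monotonicity of $\phi$ shows $g'(c)>0$ for $c<(b-a)/2$ and $g'(c)<0$ for $c>(b-a)/2$, so $g$ strictly increases up to the midpoint and strictly decreases thereafter; in particular $c=(b-a)/2$ is the unique global maximizer of $g$.

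Finally I would split on the location of $\Delta Q$ relative to this maximizer, recalling that feasibility forces $c\in[0,\Delta Q]$ (and also $c\le b-a$ since $s,s'\in[a,b]$). If $\Delta Q\le (b-a)/2$, then $g$ is increasing on all of $[0,\Delta Q]$, so its maximum on the feasible set is attained at the endpoint $c=\Delta Q$, yielding $\Delta C(\sigma)=C(a,\sigma)/C(a+\Delta Q,\sigma)$. If instead $\Delta Q>(b-a)/2$, then the interior maximizer $c=(b-a)/2$ is feasible, and evaluating there gives $\Delta C(\sigma)=C(a,\sigma)/C\!\left(\frac{a+b}{2},\sigma\right)$, which is the stated piecewise formula.

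I expect the main obstacle to be the sign analysis of $g'$ rather than any single computation: one must carefully use the evenness and strict unimodality of $\phi$ both to rule out spurious critical points outside $[0,(b-a)/2]$ and to certify that the single critical point is a maximizer (not a minimizer), so that the case distinction follows cleanly from elementary monotonicity without appealing to a second-derivative test. A secondary point to handle with care is the feasible domain of $c$, ensuring that the midpoint $(b-a)/2$ indeed lies in $[0,\min(\Delta Q,b-a)]$ precisely in the regime $\Delta Q>(b-a)/2$.
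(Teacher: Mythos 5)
Your proof is correct, and it reaches the paper's conclusion by a more elementary route than the one the paper takes. The paper formulates the problem as a constrained maximization, builds a Lagrangian with multipliers $\lambda_1,\lambda_2$ for the constraints $0\le c\le \Delta Q$, computes the derivative of the ratio via the Leibniz integral rule after a change of variables, and then splits into cases according to the KKT conditions. You instead note that $C(a,\sigma)$ is constant in $c$, so the problem reduces to maximizing $g(c)=\Phi\left(\frac{b-a-c}{\sigma}\right)-\Phi\left(\frac{-c}{\sigma}\right)$, the Gaussian mass that $\mathcal{N}(a+c,\sigma^2)$ places on $[a,b]$; differentiating with $\Phi'=\phi$ and invoking the evenness and strict unimodality of $\phi$ shows $g$ strictly increases on $\left[0,\frac{b-a}{2}\right]$ and strictly decreases thereafter, after which the case split is immediate. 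The analytic core is the same in both arguments: up to a positive constant your $g'(c)$ is exactly the paper's expression $\exp\left(-\frac{c^2}{2\sigma^2}\right)-\exp\left(-\frac{(c-(b-a))^2}{2\sigma^2}\right)$. But your wrapper buys something substantive: monotonicity on either side of the midpoint certifies that $c=\frac{b-a}{2}$ is the unique \emph{global} maximizer, whereas KKT conditions are only necessary, and the paper's Case 2 assertion that the KKT system holds if and only if $c=\frac{b-a}{2}$ (and that this stationary point is a maximum rather than some other critical point) is stated without verification — your sign analysis is precisely the missing certificate. What the paper's heavier formulation buys in return is a template that transfers directly to the multivariate mechanism, where the analogous bound is defined through the constrained optimization problem in~\eqref{eq:objective_function} and computed numerically; a one-dimensional monotonicity argument does not generalize as readily to that setting.
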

\begin{proof}
    See Appendix~\ref{apdx:proof_preli_result_2}.
\end{proof}

\subsection{Main Result on the Univariate Bounded Gaussian Mechanism}
We now proceed to the main result of this section, which bounds the variance 
required for the bounded Gaussian mechanism to provide~$\epsilon$-differential privacy. 
\begin{thm}\label{thm:main_result_1}
    Let~$\epsilon > 0$ and let a query~$Q : S^n \to D$ be given. 
    Suppose~$d\in S^n$ generates the output $Q(d)=s\in D$ and $D=[a,b]\subset \mathbb{R}$. 
    Then the Univariate Bounded Gaussian mechanism~$M_B$ 
    provides~$\epsilon$-differential privacy if
    \begin{equation}\label{eq:diff_pvt_one_d}
        \sigma^2\geq \frac{\left[(b-a)+\frac{\Delta Q}{2}\right]\Delta Q}{\epsilon-\ln(\Delta C(\sigma))}.
    \end{equation}
\end{thm}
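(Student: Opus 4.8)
The plan is to establish pure $\epsilon$-differential privacy (i.e.\ $\delta = 0$) by controlling the pointwise ratio of output densities. For each database $d$ with $Q(d) = s$, write $p_B(x; s)$ for the density of Definition~\ref{def:bounded_guassian}, emphasizing the dependence on $s$ that the paper has already flagged as problematic. Because $p_B(\,\cdot\,; s)$ is supported on all of $D$ for every admissible mean $s \in D$, it suffices to show that for every pair of adjacent databases $d, d'$ with $Q(d) = s$ and $Q(d') = s'$ one has $p_B(x; s)/p_B(x; s') \le e^{\epsilon}$ for all $x \in D$. Integrating this pointwise bound over an arbitrary measurable $A \subseteq D$ then gives $\mathbb{P}[M_B(s) \in A] \le e^{\epsilon}\,\mathbb{P}[M_B(s') \in A]$, and the symmetric inequality follows by exchanging the roles of $d$ and $d'$; since the worst-case bounds below are taken over both orderings, both directions are covered simultaneously.

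First I would factor the density ratio into a Gaussian part and a normalization part,
\[
    \frac{p_B(x; s)}{p_B(x; s')} = \frac{\phi\!\left((x-s)/\sigma\right)}{\phi\!\left((x-s')/\sigma\right)} \cdot \frac{C(s,\sigma)}{C(s',\sigma)}.
\]
The normalization factor is handled directly by the preliminary results: Lemma~\ref{lem:preli_result_1} reduces $\max_{s,s'} C(s,\sigma)/C(s',\sigma)$ to the endpoint configuration $s = a$, $s' = a+c$, and Lemma~\ref{lem:preli_result_2} optimizes over $c$ to yield $C(s,\sigma)/C(s',\sigma) \le \Delta C(\sigma)$. For the Gaussian factor, I would set $s' = s + c$ with $|c| \le \Delta Q$ and expand the quadratic to obtain
\[
    \frac{\phi\!\left((x-s)/\sigma\right)}{\phi\!\left((x-s')/\sigma\right)} = \exp\!\left(\frac{c^2 - 2(x-s)c}{2\sigma^2}\right),
\]
then maximize the exponent over $x \in [a,b]$ and $|c| \le \Delta Q$. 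Since $x - s$ ranges over $[-(b-a),\, b-a]$, the exponent $c^2 - 2(x-s)c$ is largest when $x$ and $s$ sit at opposite endpoints of $D$ and $|c| = \Delta Q$, giving the value $\Delta Q^2 + 2(b-a)\Delta Q$ and hence the bound $\exp\!\big([(b-a) + \Delta Q/2]\,\Delta Q / \sigma^2\big)$.

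Combining the two pieces yields $p_B(x;s)/p_B(x;s') \le \Delta C(\sigma)\,\exp\!\big([(b-a)+\Delta Q/2]\,\Delta Q/\sigma^2\big)$, so requiring the right-hand side to be at most $e^{\epsilon}$, taking logarithms, and solving for $\sigma^2$ produces exactly inequality~\eqref{eq:diff_pvt_one_d}. The main obstacle is being careful with the Gaussian maximization: the worst case for the Gaussian factor ($x$ and $s$ at opposite ends of $D$) is incompatible with the worst case for the normalization factor ($s = a$), and indeed with the adjacency constraint $s' \in D$, so the two factors must be bounded \emph{separately} rather than jointly. This over-estimates the true worst-case ratio but still gives a valid sufficient condition, and it is precisely the source of the $+\Delta Q/2$ term (respecting the joint constraint would give the tighter $-\Delta Q/2$). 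A secondary point worth noting is that $\Delta C(\sigma)$ itself depends on $\sigma$, so~\eqref{eq:diff_pvt_one_d} is implicit in $\sigma$; for the sufficiency claim this is harmless, as we only need the displayed inequality to hold for the chosen $\sigma$.
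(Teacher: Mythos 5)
Your proposal is correct and follows essentially the same route as the paper's proof: the same factorization of the density ratio into a Gaussian part and a normalization part, the same use of Lemmas~\ref{lem:preli_result_1} and~\ref{lem:preli_result_2} to bound the latter by $\Delta C(\sigma)$, and the same worst-case bound $\exp\bigl(\bigl[(b-a)+\tfrac{\Delta Q}{2}\bigr]\Delta Q/\sigma^2\bigr)$ on the former. The only cosmetic difference is that the paper routes the Gaussian bound through the Dwork--Roth argument with an auxiliary parameter $\epsilon'$ and then sets $\epsilon' = \epsilon - \ln\Delta C(\sigma)$, whereas you maximize the quadratic exponent directly and take logarithms; the algebra is identical.
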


\begin{proof}
    See Appendix~\ref{apdx:proof_main_result_1}.
\end{proof}

We observe that in Equation~\eqref{eq:diff_pvt_one_d} the privacy noise variance $\sigma$ appears on both sides of the equation. Given a value of~$\epsilon > 0$, it is desirable to use
the smallest~$\sigma$ as this corresponds to using the smallest variance of privacy
noise for a given privacy level. 
To do so, in the next section, we form a zero finding problem to find the 
minimum~$\sigma$ that satisfies Equation~\eqref{eq:diff_pvt_one_d}.

\subsection{Calculating \texorpdfstring{$\sigma$}{Lg}}\label{sec:calculate_sigma}
Let $\sigma^*$ be the smallest admissible value of $\sigma$. The calculation of $\sigma^*$ can be formulated 
as a zero finding problem, where we require
\begin{equation}
    f(\sigma^*)=(\sigma^*)^2 - \frac{\left[(b-a)+\frac{\Delta Q}{2}\right]\Delta Q}{\epsilon-\ln(\Delta C(\sigma^*))}=0.\label{eq:function_f}
\end{equation}

We now present some technical lemmas concerning the function $f$ with respect to a point
\begin{equation}
    \sigma_0=\sqrt{\frac{\left[(b-a)+\frac{\Delta Q}{2}\right]\Delta Q}{\epsilon}}.\label{eq:sigma_0}
\end{equation}

\begin{lem}\label{lem:denominator_greater_0}
    For any $\epsilon>0$, we have $\epsilon-\ln(\Delta C(\sigma_0))>0$.
\end{lem}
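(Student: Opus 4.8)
The plan is to derive a clean closed-form upper bound on $\ln(\Delta C(\sigma))$ that holds for every $\sigma > 0$ and strictly beats the right-hand side of the defining relation, and then to specialize it to $\sigma = \sigma_0$. The crucial observation is that, by the definition of $\sigma_0$ in Equation~\eqref{eq:sigma_0}, we have $\frac{[(b-a)+\Delta Q/2]\Delta Q}{\sigma_0^2} = \epsilon$. Hence it suffices to show that $\ln(\Delta C(\sigma)) < \frac{[(b-a)+\Delta Q/2]\Delta Q}{\sigma^2}$ for all $\sigma > 0$, and in particular at $\sigma_0$, which yields $\ln(\Delta C(\sigma_0)) < \epsilon$, i.e. $\epsilon - \ln(\Delta C(\sigma_0)) > 0$.

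First I would rewrite $\ln(\Delta C(\sigma))$ as an integral. By Lemma~\ref{lem:preli_result_2}, $\Delta C(\sigma) = C(a,\sigma)/C(a+c^*,\sigma)$, where $c^* = \min\{\Delta Q,\, (b-a)/2\}$; note that in both branches $c^* \le \Delta Q$ and, importantly, $c^* \le (b-a)/2$. Writing $g(s) = \ln C(s,\sigma) = -\ln\big[\Phi(\tfrac{b-s}{\sigma}) - \Phi(\tfrac{a-s}{\sigma})\big]$ and using $\Phi' = \phi$, a direct computation gives
\[
    g'(s) = \frac{1}{\sigma}\,\frac{\phi(\tfrac{b-s}{\sigma}) - \phi(\tfrac{a-s}{\sigma})}{\Phi(\tfrac{b-s}{\sigma}) - \Phi(\tfrac{a-s}{\sigma})}.
\]
After the substitution $u = (s-a)/\sigma$, and with the shorthand $w = (b-a)/\sigma$ and $\gamma = c^*/\sigma \le w/2$, this turns $\ln(\Delta C(\sigma)) = g(a) - g(a+c^*)$ into
\[
    \ln(\Delta C(\sigma)) = \int_0^\gamma \frac{\phi(u) - \phi(w-u)}{\Phi(w-u) - \Phi(-u)}\,du,
\]
where I used $\phi(-u) = \phi(u)$.

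Next I would bound the integrand by $w$ uniformly on $u \in [0,\gamma]$. Using $\phi'(t) = -t\,\phi(t)$, the numerator equals $\int_u^{w-u} t\,\phi(t)\,dt$; since $\gamma \le w/2$ forces $0 \le u \le w-u \le w$, every $t$ in this range satisfies $0 \le t \le w$, so the numerator is at most $w\int_u^{w-u}\phi(t)\,dt$. The denominator is $\int_{-u}^{w-u}\phi(t)\,dt \ge \int_u^{w-u}\phi(t)\,dt > 0$, so the integrand is at most $w$. Integrating yields $\ln(\Delta C(\sigma)) \le w\gamma = \frac{(b-a)c^*}{\sigma^2} \le \frac{(b-a)\Delta Q}{\sigma^2}$.

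Finally, since $\Delta Q > 0$ we have the strict inequality $\frac{(b-a)\Delta Q}{\sigma^2} < \frac{[(b-a)+\Delta Q/2]\Delta Q}{\sigma^2}$, so $\ln(\Delta C(\sigma)) < \frac{[(b-a)+\Delta Q/2]\Delta Q}{\sigma^2}$ for every $\sigma > 0$. Evaluating at $\sigma = \sigma_0$ and invoking Equation~\eqref{eq:sigma_0} gives $\ln(\Delta C(\sigma_0)) < \epsilon$, which is the claim. I expect the main obstacle to be the second step: obtaining the integral representation with the correct signs and confirming that the worst-case shift $c^*$ supplied by Lemma~\ref{lem:preli_result_2} satisfies $c^* \le (b-a)/2$ (so that $\gamma \le w/2$ and the pointwise bound $t \le w$ is legitimate over the entire domain of integration). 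Once that is in place, the remaining estimates are elementary.
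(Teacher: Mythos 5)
Your proof is correct, and it takes a genuinely different route from the paper's. The paper also starts from Lemma~\ref{lem:preli_result_2}, but it then splits into the two cases $\Delta Q\le\frac{b-a}{2}$ and $\Delta Q>\frac{b-a}{2}$; in each case it factors the shifted Gaussian in the numerator of $\Delta C(\sigma_0)$ as $\exp\left(-\frac{(x-a)^2}{2\sigma_0^2}\right)\exp\left(\frac{(x-a)c}{\sigma_0^2}\right)\exp\left(-\frac{c^2}{2\sigma_0^2}\right)$, invokes the mean value theorem for integrals to collapse the ratio to $\exp\left(\frac{(2(\delta-a)-c)c}{2\sigma_0^2}\right)$ for an unknown intermediate point $\delta\in[a,b]$, plugs in the explicit value of $\sigma_0$ from Equation~\eqref{eq:sigma_0}, and bounds the exponent by $\epsilon$. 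You instead unify the two cases through $c^*=\min\{\Delta Q,\frac{b-a}{2}\}$, represent $\ln(\Delta C(\sigma))$ as the integral of the logarithmic derivative of $C(\cdot,\sigma)$, and bound that integrand uniformly by $w=(b-a)/\sigma$ via the identity $\phi(u)-\phi(w-u)=\int_u^{w-u}t\,\phi(t)\,dt$. What your route buys is a $\sigma$-uniform, closed-form estimate $\ln(\Delta C(\sigma))\le (b-a)c^*/\sigma^2$ valid for every $\sigma>0$ with no case split and no unquantified intermediate point, where the paper's argument only produces a bound at the specific point $\sigma_0$; moreover, since $\Delta C_m$ factorizes coordinatewise (see Equation~\eqref{eq:delta_c_n_simplified}), your per-coordinate bounds plus Cauchy--Schwarz would also deliver the multivariate analogue, Lemma~\ref{lem:denominator_greater_0_nd}. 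What the paper's route buys is a shorter computation per case, since the mean value theorem does in one stroke what your integral representation and pointwise bound do in two. Two minor remarks: both arguments need $\Delta Q>0$ for the final strict inequality (you use it explicitly in your last step; the paper needs it for $2(\delta-a)-\Delta Q<2(b-a)+\Delta Q$), which is harmless since $\sigma_0$ degenerates otherwise; and the obstacle you flagged is settled exactly as you anticipated, because Lemma~\ref{lem:preli_result_2} caps the worst-case shift at $\frac{b-a}{2}$ in both branches --- indeed even without that cap your bound survives, since for $u>w/2$ the numerator $\phi(u)-\phi(w-u)$ is nonpositive and the integrand is trivially at most $w$.
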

\begin{proof}
See Appendix~\ref{apdx:proof_denominator_greater_0}.
\end{proof}

\begin{lem}\label{lem:delta_c_greater_0}
    For all $\sigma\in(0,\infty)$, we have $\ln(\Delta C(\sigma))>0$.
\end{lem}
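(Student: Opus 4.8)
The plan is to prove the equivalent claim $\Delta C(\sigma) > 1$ for every $\sigma \in (0,\infty)$. By Lemma~\ref{lem:preli_result_2}, the maximizing shift $c^\ast$ satisfies $a + c^\ast = a + \Delta Q$ when $\Delta Q \le (b-a)/2$ and $a + c^\ast = (a+b)/2$ otherwise; in both cases $c^\ast \in (0, (b-a)/2]$, so that $\Delta C(\sigma) = C(a,\sigma)/C(a+c^\ast,\sigma)$ with the point $a + c^\ast$ lying strictly to the right of $a$ but no further than the midpoint of $[a,b]$. Since the logarithm is increasing and $C$ is strictly positive, it suffices to establish $C(a,\sigma) > C(a+c^\ast,\sigma)$.

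It is cleanest to argue through the reciprocal. Writing
\[
  g(s) = \frac{1}{C(s,\sigma)} = \Phi\!\left(\frac{b-s}{\sigma}\right) - \Phi\!\left(\frac{a-s}{\sigma}\right),
\]
which is the Gaussian probability mass captured inside $[a,b]$ when the query value is $s$, the target inequality $C(a,\sigma) > C(a+c^\ast,\sigma)$ is equivalent to $g(a) < g(a+c^\ast)$. I would obtain this by showing that $g$ is strictly increasing on $[a, (a+b)/2]$, which is enough because $0 < c^\ast \le (b-a)/2$ places $a + c^\ast$ inside this interval.

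Monotonicity follows from a direct derivative computation,
\[
  g'(s) = \frac{1}{\sigma}\left[\phi\!\left(\frac{a-s}{\sigma}\right) - \phi\!\left(\frac{b-s}{\sigma}\right)\right].
\]
For $s \in [a, (a+b)/2)$ one has $|a-s| = s-a < b-s = |b-s|$, and since $\phi$ is even and strictly decreasing in $|\cdot|$ this gives $\phi((a-s)/\sigma) > \phi((b-s)/\sigma)$, hence $g'(s) > 0$. Consequently $g(a) < g(a+c^\ast)$, so $C(a,\sigma) > C(a+c^\ast,\sigma)$ and $\Delta C(\sigma) > 1$ for all $\sigma > 0$, which is the desired conclusion.

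The geometric content---that a truncated Gaussian centered at the boundary $a$ captures strictly less mass than one centered further inside---is the crux, and the sign computation for $g'$ is what makes it rigorous. The only point requiring care is that $c^\ast > 0$, i.e.\ that the query is nondegenerate with $\Delta Q > 0$; a sensitivity-zero query carries no privacy risk and may be excluded.
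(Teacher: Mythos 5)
Your proof is correct, but it takes a genuinely different route from the paper's. The paper argues directly on the ratio of integrals defining $\Delta C(\sigma)$: it substitutes $y=x-a$ in the numerator so that shifting the center becomes shifting the integration window, cancels the common mass over $[a,b-c]$, and then compares the two leftover boundary strips, $\int_{-c}^{0}\exp\left(-\tfrac{y^2}{2\sigma^2}\right)dy$ versus $\int_{-c}^{0}\exp\left(-\tfrac{(z+b-a)^2}{2\sigma^2}\right)dz$, via the comparison property of definite integrals (the first integrand dominates since $|y|\le c\le b-a-c\le z+b-a$ for $y,z\in[-c,0]$ and $c\le\tfrac{b-a}{2}$). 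You instead prove monotonicity of the truncation mass $g(s)=1/C(s,\sigma)$ as a function of the center: a one-line derivative computation plus the evenness and unimodality of $\phi$ gives $g'>0$ on $[a,\tfrac{a+b}{2})$, hence $g(a)<g(a+c^{*})$. Your derivative argument is in effect the infinitesimal version of the paper's strip comparison; it buys compactness, avoids the change-of-variables bookkeeping, and makes explicit the slightly stronger fact that $C(a,\sigma)/C(a+c,\sigma)>1$ for \emph{every} shift $c\in\left(0,\tfrac{b-a}{2}\right]$, not just the maximizer. Both arguments use Lemma~\ref{lem:preli_result_2} only through the fact that the maximizing shift lies in $\left(0,\tfrac{b-a}{2}\right]$, and both implicitly require $\Delta Q>0$, a degeneracy you are right to flag explicitly where the paper leaves it tacit.
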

\begin{proof}
See Appendix~\ref{apdx:proof_delta_c_greater_0}.
\end{proof}

Next, Lemma~\ref{lem:lemma_f_sigma_0} shows the value of $f(\sigma_0)$ and Lemma~\ref{lem:lemma_f_decreasing} shows $f$ is a monotonically increasing function on $[\sigma_0,\infty)$.

\begin{lem}\label{lem:lemma_f_sigma_0}
    For the point $\sigma_0=\sqrt{\frac{\left[(b-a)+\frac{\Delta Q}{2}\right]\Delta Q}{\epsilon}}$, we have $f(\sigma_0)<0$.
\end{lem}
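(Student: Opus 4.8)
The plan is to evaluate $f$ directly at the point $\sigma_0$ and exploit the fact that $\sigma_0$ is precisely the value that makes the numerator term cancel against $\sigma_0^2$. Writing $K = \left[(b-a)+\frac{\Delta Q}{2}\right]\Delta Q$, note that $K > 0$ since $b > a$ and $\Delta Q > 0$, and that by definition $\sigma_0^2 = K/\epsilon$. Substituting into Equation~\eqref{eq:function_f} gives
\begin{equation*}
    f(\sigma_0) = \frac{K}{\epsilon} - \frac{K}{\epsilon - \ln(\Delta C(\sigma_0))} = K\left[\frac{1}{\epsilon} - \frac{1}{\epsilon - \ln(\Delta C(\sigma_0))}\right].
\end{equation*}
Thus the sign of $f(\sigma_0)$ is determined entirely by the bracketed difference of reciprocals, and the problem reduces to comparing $\epsilon$ with $\epsilon - \ln(\Delta C(\sigma_0))$.

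Here the two preceding lemmas do essentially all of the work. First I would invoke Lemma~\ref{lem:denominator_greater_0}, which guarantees $\epsilon - \ln(\Delta C(\sigma_0)) > 0$; this ensures the second denominator is positive, so that no spurious sign change occurs when comparing the two fractions. Next I would invoke Lemma~\ref{lem:delta_c_greater_0}, evaluated at $\sigma = \sigma_0$, which gives $\ln(\Delta C(\sigma_0)) > 0$ and hence $\epsilon - \ln(\Delta C(\sigma_0)) < \epsilon$. Since both denominators are strictly positive and the second is strictly smaller, we obtain $\frac{1}{\epsilon - \ln(\Delta C(\sigma_0))} > \frac{1}{\epsilon}$.

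Combining these, the bracketed term is strictly negative, and multiplying by the positive constant $K$ yields $f(\sigma_0) < 0$, as claimed.

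I do not expect any serious obstacle in this argument, since the real analytic content—positivity of the denominator and strict positivity of $\ln(\Delta C(\sigma))$—has already been established in Lemmas~\ref{lem:denominator_greater_0} and~\ref{lem:delta_c_greater_0}. The only point requiring care is to recognize that evaluating at $\sigma_0$ cancels $K$ cleanly and reduces the claim to a monotonicity comparison of $1/x$ on the positive reals; once that reduction is made, the conclusion follows immediately from the sign facts supplied by the earlier results.
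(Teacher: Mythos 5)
Your proposal is correct and follows essentially the same route as the paper's own proof: plug $\sigma_0$ into $f$, then invoke Lemma~\ref{lem:denominator_greater_0} and Lemma~\ref{lem:delta_c_greater_0} to conclude $\epsilon > \epsilon - \ln(\Delta C(\sigma_0)) > 0$, which forces the second fraction to exceed the first. Your factoring out of $K$ and the explicit reduction to monotonicity of $1/x$ is just a slightly more verbose presentation of the identical argument.
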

\begin{proof}
By plugging in $\sigma_0$ we have
\begin{equation*}
    f(\sigma_0)=\frac{\left[(b-a)+\frac{\Delta Q}{2}\right]\Delta Q}{\epsilon}-\frac{\left[(b-a)+\frac{\Delta Q}{2}\right]\Delta Q}{\epsilon-\ln(\Delta C(\sigma_0))}.
\end{equation*}

By Lemma~\ref{lem:denominator_greater_0} and Lemma~\ref{lem:delta_c_greater_0} we have $\epsilon>\epsilon-\ln(\Delta C(\sigma_0))>0$. Therefore $f(\sigma_0)<0$.
\end{proof}

\begin{lem}\label{lem:lemma_f_decreasing}
    For any $\sigma\in[\sigma_0,\infty)$, we have $f'(\sigma)>0$.
\end{lem}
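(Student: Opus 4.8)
The plan is to show that the quantity subtracted from $\sigma^2$ in the definition of $f$ is non-increasing on $[\sigma_0,\infty)$, so that $f$ is the sum of the strictly increasing function $\sigma^2$ and a non-decreasing function. Writing $K = \left[(b-a)+\tfrac{\Delta Q}{2}\right]\Delta Q > 0$ and $g(\sigma) = \epsilon - \ln(\Delta C(\sigma))$, we have $f(\sigma) = \sigma^2 - K/g(\sigma)$ and hence $f'(\sigma) = 2\sigma + K\,g'(\sigma)/g(\sigma)^2$. Since $\sigma>0$ and $K>0$, it suffices to establish two facts on $[\sigma_0,\infty)$: that $g(\sigma) > 0$, and that $g'(\sigma) > 0$, i.e. that $\ln(\Delta C(\sigma))$ is strictly decreasing. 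The positivity of $g$ will then follow for free from the monotonicity together with Lemma~\ref{lem:denominator_greater_0}, because $g(\sigma) \ge g(\sigma_0) = \epsilon - \ln(\Delta C(\sigma_0)) > 0$. Thus the whole statement reduces to proving $\tfrac{d}{d\sigma}\ln(\Delta C(\sigma)) < 0$.

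To treat both branches of $\Delta C$ uniformly I would set $w = b-a$ and observe that in either case $\Delta C(\sigma) = \big[\Phi(\tfrac{w-c}{\sigma}) - \Phi(\tfrac{-c}{\sigma})\big]\big/\big[\Phi(\tfrac{w}{\sigma}) - \Phi(0)\big]$ for a single displacement $c \in (0, w/2]$, namely $c = \Delta Q$ in the first case and $c = w/2$ in the second. Reading $\Phi(t/\sigma) = \mathbb{P}(Y \le t)$ for $Y \sim \mathcal{N}(0,\sigma^2)$, the numerator and denominator are $\mathbb{P}(Y \in I_P)$ and $\mathbb{P}(Y \in I_Q)$ for the two equal-length windows $I_P = (-c,\,w-c)$ and $I_Q = (0,\,w)$. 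Differentiating the $\mathcal{N}(0,\sigma^2)$ density with respect to $\sigma$ gives, for any fixed interval $I$, the identity $\tfrac{d}{d\sigma}\ln \mathbb{P}(Y \in I) = \sigma^{-3}\big(\mathbb{E}[Y^2 \mid Y \in I] - \sigma^2\big)$, so that $\tfrac{d}{d\sigma}\ln(\Delta C(\sigma)) = \sigma^{-3}\big(\mathbb{E}[Y^2 \mid Y \in I_P] - \mathbb{E}[Y^2 \mid Y \in I_Q]\big)$. The monotonicity I need is therefore exactly the statement that conditioning on $I_P$ produces a strictly smaller second moment than conditioning on $I_Q$.

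This conditional second-moment inequality is where the work lies, and I expect it to be the main obstacle. The intuition is geometric: $I_Q$ is centered at $w/2$, whereas $I_P$ is centered at $w/2 - c \in [0, w/2)$ because $c \le w/2$, so $I_P$ sits closer to the origin, where the Gaussian concentrates its mass, and conditioning on the more centered window should pull $Y^2$ downward. To make this rigorous I would prove the sharper claim that $\psi(m) := \mathbb{E}[Y^2 \mid Y \in (m - w/2,\, m + w/2)]$ is increasing for $m \ge 0$ and apply it at $m_P = w/2 - c$ and $m_Q = w/2$. With $\alpha = m - w/2$, $\beta = m + w/2$, a direct differentiation yields $\psi'(m)\,\mathbb{P}(Y \in (\alpha,\beta)) = (\beta^2 - \psi)f(\beta) - (\alpha^2 - \psi)f(\alpha)$, where $f$ is the $\mathcal{N}(0,\sigma^2)$ density. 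For $m > 0$ one has $\beta > |\alpha|$, hence $\beta^2 > \psi$ and $f(\beta) < f(\alpha)$; when $\alpha^2 \le \psi$ the right-hand side is visibly positive, and the delicate case is $\alpha^2 > \psi$, where the larger factor $\beta^2 - \psi$ must be shown to outweigh the smaller density $f(\beta)$. I would settle this case using the explicit Gaussian ratio $f(\alpha)/f(\beta) = \exp\!\big((\beta^2 - \alpha^2)/2\sigma^2\big)$ together with the log-concavity of $f$. Once the conditional second-moment inequality is in hand, the reduction above gives $\tfrac{d}{d\sigma}\ln(\Delta C(\sigma)) < 0$, hence $g'(\sigma) > 0$ and, via Lemma~\ref{lem:denominator_greater_0}, $g(\sigma) > 0$ on $[\sigma_0,\infty)$; therefore $f'(\sigma) = 2\sigma + K\,g'(\sigma)/g(\sigma)^2 > 0$, which is the claim.
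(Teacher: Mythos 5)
Your reduction is sound as far as it goes: writing $f'(\sigma)=2\sigma+Kg'(\sigma)/g(\sigma)^2$, noting that $g(\sigma)\geq g(\sigma_0)>0$ would follow from Lemma~\ref{lem:denominator_greater_0} once monotonicity is known, and converting the problem via the (correct) identity $\frac{d}{d\sigma}\ln \mathbb{P}(Y\in I)=\sigma^{-3}\bigl(\mathbb{E}[Y^2\mid Y\in I]-\sigma^2\bigr)$ into a comparison of conditional second moments on the two equal-length windows $I_P=(-c,w-c)$ and $I_Q=(0,w)$. This probabilistic reformulation is genuinely different from the paper's route (the paper rescales the integration limits by $\sigma$, differentiates with the Leibniz rule, and uses the mean value theorem for integrals to exhibit $(\Delta C)'$ as a sum of two non-positive terms $\gamma_1,\gamma_2$). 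The gap is your key sliding-window claim: $\psi(m)=\mathbb{E}[Y^2\mid Y\in(m-w/2,\,m+w/2)]$ is \emph{not} increasing for $m\geq 0$, and it fails precisely in the case $\alpha^2>\psi$ that you deferred. Concrete counterexample inside the relevant parameter range: take $w=b-a=2$, $\Delta Q=0.5$ (so $c=0.5$, $m_P=0.5$, $m_Q=1$) and $\epsilon=10$, giving $\sigma_0\approx 0.335$; at the admissible value $\sigma=0.34$, working in units of $\sigma$, one finds $\psi(0.5)/\sigma^2=\mathbb{E}[Z^2\mid Z>-1.47]\approx 0.79$ while $\psi(0.66)/\sigma^2=\mathbb{E}[Z^2\mid Z>-1]\approx 0.71$ ($Z$ standard normal; the right edges lie $4.4\sigma$ and $4.9\sigma$ out and are negligible). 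So $\psi$ strictly \emph{decreases} along part of your interpolation path from $m_P$ to $m_Q$. The mechanism is visible in your own formula: once the near edge $\alpha$ is a few $\sigma$ from the origin, $f(\beta)/f(\alpha)=\exp\bigl(-(\beta^2-\alpha^2)/(2\sigma^2)\bigr)$ is exponentially small, so $(\beta^2-\psi)f(\beta)-(\alpha^2-\psi)f(\alpha)<0$; no appeal to log-concavity can close this case, because the inequality you would need there is simply false.

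What \emph{is} true, and is all you actually need, is the comparison at the two specific endpoints of your path, and it holds for a reason the translation argument erases: $I_Q=(0,w)$ has its left edge exactly at the center of symmetry of the Gaussian (in my counterexample $\psi$ dips and then climbs back up exactly as the left edge reaches the origin, so the endpoint inequality $\psi(m_P)\leq\psi(m_Q)$ survives even though monotonicity does not). A repair that stays within your framework: fold the negative part of $I_P$ onto the positive axis using the symmetry $f(-y)=f(y)$. Then the conditional law of $|Y|$ given $Y\in I_P$ has density proportional to $f(v)\bigl[\mathbf{1}(0<v<c)+\mathbf{1}(0<v<w-c)\bigr]$, the conditional law of $|Y|$ given $Y\in I_Q$ has density proportional to $f(v)\,\mathbf{1}(0<v<w)$, and the ratio of the second to the first is $1/2$ on $(0,c)$, $1$ on $(c,w-c)$, and $+\infty$ on $(w-c,w)$, i.e.\ non-decreasing in $v$ (here $c\leq w/2$ is used). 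By monotone likelihood ratio, the $I_Q$-conditional law of $|Y|$ stochastically dominates the $I_P$-conditional law, hence $\mathbb{E}[Y^2\mid Y\in I_P]\leq\mathbb{E}[Y^2\mid Y\in I_Q]$, which gives $(\ln\Delta C)'(\sigma)\leq 0$ and then $f'(\sigma)\geq 2\sigma>0$. With that substitution for your sliding-window lemma, the rest of your argument goes through; without it, the proof as proposed does not.
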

\begin{proof}
See Appendix~\ref{apdx:proof_lemma_f_decreasing}.
\end{proof}

By combining Lemma~\ref{lem:lemma_f_sigma_0} and Lemma~\ref{lem:lemma_f_decreasing}, we can use Algorithm 1 in~\cite{holohan2018bounded}, given below as Algorithm~\ref{alg:zero_finding_algorithm},
to find the exact fixed point of~$f$, which in our case is $\sigma^*$. 

\begin{algorithm}
\SetAlgoLined
\SetKwInOut{Input}{Input}
\SetKwInOut{Output}{Output}
\Input{Function $f$ given in Equation~\eqref{eq:function_f}, initial condition for privacy noise variance $\sigma_0$ given in Equation~\eqref{eq:sigma_0}}
\Output{Optimal privacy parameter~$(\sigma^*)^2$}
left$=\sigma_0^2$\;
right$=\frac{\left[(b-a)+\frac{\Delta Q}{2}\right]\Delta Q}{\epsilon-\ln(\Delta C(\sigma_0))}$\;
intervalSize$=($left$+$right$)/2$\;
\While{intervalSize$>$right$-$left}{
    intervalSize $=$ right $-$ left\;
    $(\sigma^*)^2=\frac{\text{left}+\text{right}}{2}$\;
    \If{$\frac{\left[(b-a)+\frac{\Delta Q}{2}\right]\Delta Q}{\epsilon-\ln(\Delta C(\sigma^*))}\geq(\sigma^*)^2$}{
        left $=(\sigma^*)^2$\;
    }
    \If{$\frac{\left[(b-a)+\frac{\Delta Q}{2}\right]\Delta Q}{\epsilon-\ln(\Delta C(\sigma^*))}\leq(\sigma^*)^2$}{
        right $=(\sigma^*)^2$ \;
    }
}
\Return $(\sigma^*)^2$
\caption{A Zero Finding Algorithm}
 \label{alg:zero_finding_algorithm}
\end{algorithm}

The output of Algorithm~\ref{alg:zero_finding_algorithm} can be used
in Definition~\ref{def:bounded_guassian} to form the univariate bounded Gaussian mechanism,
and Theorem~\ref{thm:main_result_1} shows that doing so provides~$\epsilon$-differential privacy.
We next present an analogous mechanism for multi-variate query responses.

\section{Multivariate Bounded Gaussian mechanism} \label{sec:multivariate_mechanism}
We begin by formally defining the multivariate bounded Gaussian mechanism, and it
will be the focus of the remainder of this section.

\begin{defi}[Multivariate Bounded Gaussian Mechanism]\label{def:bounded_guassian_nd}
    Fix~$\sigma_m>0$. Let a query~$Q : S^n \to D$ be given, where $D=[\mathbf{a},\mathbf{b}]\subset \mathbb{R}^m$.
    For a database $d\in S^n$ and its output $Q(d)=\mathbf{s}\in D$,  the multivariate bounded Gaussian mechanism $M_B^m:\Omega\times \mathbb{R}^n\to D$ is given by the probability density function
    \begin{equation}
    p_B(\mathbf{x})=\begin{cases}
        \frac{\exp{\left(-\frac{1}{2}(\mathbf{x}-\mathbf{s})^T\Sigma^{-1}(\mathbf{x}-\mathbf{s})\right)}}{\int_\mathbf{a}^\mathbf{b}\exp{\left(-\frac{1}{2}(\mathbf{x}-\mathbf{s})^T\Sigma^{-1}(\mathbf{x}-\mathbf{s})\right)}d\mathbf{x}} & \text{if } \mathbf{x}\in D,\\[5pt]
        0 & \text{otherwise,}
    \end{cases}
\end{equation}
where $\Sigma = \sigma_m^2I \in \mathbb{R}^{m \times m}$.
\end{defi}

Before introducing the this mechanism's differential privacy guarantee we first require some preliminary results.

\subsection{Preliminary Results}
\begin{lem}\label{lem:preli_result_3}
    Fix~$\sigma_m>0$. For a database $d\in S^n$ and a query $Q(d)=\mathbf{s}\in D$ and $D=[\mathbf{a},\mathbf{b}]\subset \mathbb{R}^m$, let
    \begin{align}
        C_m(\mathbf{s},\sigma_m) = \frac{1}{\int_\mathbf{a}^\mathbf{b}\exp{\left(-\frac{1}{2}(\mathbf{x}-\mathbf{s})^T\Sigma^{-1}(\mathbf{x}-\mathbf{s})\right)}d\mathbf{x}}.\label{eq:definition_C_multidim}
    \end{align}
    Let an adjacent database $d'\in S^m$ satisfy $Q(d')=\mathbf{s}'\in D$. Then we have
    \begin{align*}
        \max_{\mathbf{s},\mathbf{s}'} \frac{C_m(\mathbf{s},\sigma_m)}{C_m(\mathbf{s}',\sigma_m)} = \frac{C_m(\mathbf{a},\sigma_m)}{C_m(\mathbf{a}+\mathbf{c},\sigma_m)},
    \end{align*}
    where $c_i=|s_i-s_i'|$ for the $i^{th}$ element of $\mathbf{c}\in\mathbb{R}^m$ and $||\mathbf{c}||_2\leq\Delta Q$.
\end{lem}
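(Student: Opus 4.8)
The plan is to exploit the diagonal covariance structure $\Sigma = \sigma_m^2 I$ to reduce the multivariate claim to a product of univariate problems, each of which is governed by Lemma~\ref{lem:preli_result_1}. Since $\Sigma^{-1} = \sigma_m^{-2} I$, the quadratic form separates as $(\mathbf{x}-\mathbf{s})^T\Sigma^{-1}(\mathbf{x}-\mathbf{s}) = \sigma_m^{-2}\sum_{i=1}^m (x_i - s_i)^2$, so the normalizing integral factors over coordinates:
\begin{equation*}
\int_{\mathbf{a}}^{\mathbf{b}} \exp\!\left(-\tfrac{1}{2}(\mathbf{x}-\mathbf{s})^T\Sigma^{-1}(\mathbf{x}-\mathbf{s})\right)d\mathbf{x} = \prod_{i=1}^m \int_{a_i}^{b_i} \exp\!\left(-\tfrac{(x_i-s_i)^2}{2\sigma_m^2}\right)dx_i.
\end{equation*}
First I would evaluate each one-dimensional factor in terms of $\Phi$, writing it as $\sigma_m\sqrt{2\pi}\big[\Phi(\tfrac{b_i-s_i}{\sigma_m})-\Phi(\tfrac{a_i-s_i}{\sigma_m})\big]$, so that $C_m(\mathbf{s},\sigma_m) = (\sigma_m\sqrt{2\pi})^{-m}\prod_{i=1}^m C_i(s_i,\sigma_m)$, where each $C_i$ is exactly the univariate constant from Equation~\eqref{eq:definition_C} instantiated with the interval $[a_i,b_i]$ and parameter $\sigma_m$.

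With this factorization in hand, the ratio of interest decouples: the constant prefactor cancels, giving $\frac{C_m(\mathbf{s},\sigma_m)}{C_m(\mathbf{s}',\sigma_m)} = \prod_{i=1}^m \frac{C_i(s_i,\sigma_m)}{C_i(s_i',\sigma_m)}$. Next I would split the maximization over $(\mathbf{s},\mathbf{s}')$ with $\|\mathbf{s}-\mathbf{s}'\|_2\le \Delta Q$ into an outer maximization over the per-coordinate displacements $c_i = |s_i-s_i'|$ (constrained by $\|\mathbf{c}\|_2\le\Delta Q$) and an inner maximization over the positions $(s_i,s_i')$ compatible with each fixed $c_i$. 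Because each factor depends only on the pair $(s_i,s_i')$ and the inner positional constraints are separable across coordinates, the inner maximum of the product equals the product of the inner maxima.

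The key step is then to apply Lemma~\ref{lem:preli_result_1} coordinatewise: for each $i$, the lemma (with interval $[a_i,b_i]$ and parameter $\sigma_m$) states that the maximum of $C_i(s_i,\sigma_m)/C_i(s_i',\sigma_m)$ over positions with fixed displacement $c_i$ is attained at $s_i = a_i$, $s_i'=a_i+c_i$, giving $C_i(a_i,\sigma_m)/C_i(a_i+c_i,\sigma_m)$. Multiplying these identities across $i$ and reassembling via the factorization yields $\prod_{i=1}^m \frac{C_i(a_i,\sigma_m)}{C_i(a_i+c_i,\sigma_m)} = \frac{C_m(\mathbf{a},\sigma_m)}{C_m(\mathbf{a}+\mathbf{c},\sigma_m)}$, which is the claimed optimizer for any admissible displacement $\mathbf{c}$. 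I would close by noting that relocating to $\mathbf{s}=\mathbf{a}$, $\mathbf{s}'=\mathbf{a}+\mathbf{c}$ leaves each $c_i$ and hence $\|\mathbf{c}\|_2$ unchanged, so the $\ell_2$ budget is respected. The main obstacle I anticipate is justifying the interchange of maximization and product; the resolution is that positivity of every $C_i$ together with the separability of the positional constraints guarantees that the product of coordinatewise maxima is itself attainable, so no cross-coordinate coupling can improve the bound.
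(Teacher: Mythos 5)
Your proposal is correct and follows essentially the same route as the paper: both exploit the diagonal covariance $\Sigma=\sigma_m^2 I$ to factor the normalizing integral, and hence the ratio $C_m(\mathbf{s},\sigma_m)/C_m(\mathbf{s}',\sigma_m)$, into a product of univariate ratios over the intervals $[a_i,b_i]$, and then reduce to the univariate analysis of Lemma~\ref{lem:preli_result_1}. The only cosmetic difference is that the paper establishes coordinatewise monotonicity (showing the partial derivative of the ratio with respect to each $s_i$ is nonpositive via the product rule, reusing the univariate derivative computation), whereas you invoke the conclusion of Lemma~\ref{lem:preli_result_1} per coordinate and exchange the maximum with the product via separability and positivity of the factors --- the same mechanism packaged slightly differently.
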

\begin{proof}
    See Appendix~\ref{apdx:proof_preli_result_3}.
\end{proof}

Now we define $\Delta C_m(\sigma_m,\mathbf{c}^*)$, where
\begin{align*}
    \frac{C_m(\mathbf{a},\sigma_m)}{C_m(\mathbf{a}+\mathbf{c},\sigma_m)}\leq \frac{C_m(\mathbf{a},\sigma_m)}{C_m(\mathbf{a}+\mathbf{c}^*,\sigma_m)}=:\Delta C_m(\sigma_m,\mathbf{c}^*),
\end{align*}
where $\mathbf{c}^*\in\mathbb{R}^m$ is the optimal value of $\mathbf{c}=(c_1,\dots,c_m)^T$ in the following optimization problem:
\begin{align}
    \max_{\mathbf{c}\in\mathbb{R}^m}\,\,\,&\Delta C_m(\sigma_m,\mathbf{c})\label{eq:objective_function}\\
    \text{subject to } \,\, &0\leq c_i\leq b_i-a_i \text{ for each }i\nonumber\\
    &||\mathbf{c}||_2 \leq \Delta Q.\nonumber
\end{align}

\begin{rem}
The optimization problem in~\eqref{eq:objective_function} is a concave maximization problem. 
In general, the value of~$c^*$ does not have a closed form, but it can be efficiently
computed 
numerically using standard optimization software such as CVX or CasADi. 
\end{rem}

\subsection{Main Results}
We now proceed to the main result of this section, which bounds the variance 
required for the multivariate bounded Gaussian mechanism to provide~$\epsilon$-differential privacy. 
\begin{thm}\label{thm:main_result_2}
    Let~$\epsilon>0$ and a query~$Q : S^n \to D$ be given. 
    For a database $d\in S^n$ and its output $Q(d)=\mathbf{s}\in D$, where $D=[\mathbf{a},\mathbf{b}]\subset \mathbb{R}^n$, 
    the multivariate bounded Gaussian mechanism $M_B$ provides $\epsilon$-differential privacy if
    \begin{equation} \label{eq:sigma_m_multi}
        \sigma_m^2\geq \frac{\left[||\mathbf{b-a}||_2+\frac{\Delta Q}{2}\right]\Delta Q}{\epsilon-\ln(\Delta C_m(\sigma_m,\mathbf{c}^*))}.
    \end{equation}
\end{thm}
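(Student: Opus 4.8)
The plan is to establish $\epsilon$-differential privacy directly from the density in Definition~\ref{def:bounded_guassian_nd} by bounding the pointwise likelihood ratio, exactly as one does for the univariate mechanism. Since $\delta=0$, it suffices to show that for any two adjacent databases $d,d'$ with outputs $\mathbf{s}=Q(d)$ and $\mathbf{s}'=Q(d')$, the corresponding densities $p_B^{\mathbf{s}}$ and $p_B^{\mathbf{s}'}$ satisfy $p_B^{\mathbf{s}}(\mathbf{x})\leq e^{\epsilon}p_B^{\mathbf{s}'}(\mathbf{x})$ for every $\mathbf{x}$; integrating this inequality over any measurable $A\subseteq\mathbb{R}^m$ then yields $\mathbb{P}[M_B(Q(d))\in A]\leq e^{\epsilon}\mathbb{P}[M_B(Q(d'))\in A]$. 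Both densities vanish outside $D$, so only $\mathbf{x}\in D$ must be checked.

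First I would take logarithms of the ratio and split it into two contributions. Writing the density as $p_B^{\mathbf{s}}(\mathbf{x})=C_m(\mathbf{s},\sigma_m)\exp\!\big(-\tfrac{1}{2\sigma_m^2}\|\mathbf{x}-\mathbf{s}\|_2^2\big)$, with $C_m$ from Equation~\eqref{eq:definition_C_multidim}, gives
\[
\ln\frac{p_B^{\mathbf{s}}(\mathbf{x})}{p_B^{\mathbf{s}'}(\mathbf{x})}=\ln\frac{C_m(\mathbf{s},\sigma_m)}{C_m(\mathbf{s}',\sigma_m)}+\frac{1}{2\sigma_m^2}\Big(\|\mathbf{x}-\mathbf{s}'\|_2^2-\|\mathbf{x}-\mathbf{s}\|_2^2\Big).
\]
The first (normalizing-constant) term is controlled by the preliminary results: Lemma~\ref{lem:preli_result_3} identifies the worst-case displacement at the corner $\mathbf{a}$, and the definition of $\Delta C_m(\sigma_m,\mathbf{c}^*)$ through the optimization problem in~\eqref{eq:objective_function} gives $\ln\frac{C_m(\mathbf{s},\sigma_m)}{C_m(\mathbf{s}',\sigma_m)}\leq\ln\Delta C_m(\sigma_m,\mathbf{c}^*)$, so this term needs no further work.

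The main work is bounding the second (exponent) term, and this is the crux of the argument. Setting $\mathbf{v}=\mathbf{s}'-\mathbf{s}$, an algebraic expansion gives $\|\mathbf{x}-\mathbf{s}'\|_2^2-\|\mathbf{x}-\mathbf{s}\|_2^2=-2(\mathbf{x}-\mathbf{s})^T\mathbf{v}+\|\mathbf{v}\|_2^2$. I would then apply the Cauchy--Schwarz inequality to the cross term, the sensitivity bound $\|\mathbf{v}\|_2\leq\Delta Q$ from Definition~\ref{def:sensitivity}, and the geometric fact that $\|\mathbf{x}-\mathbf{s}\|_2\leq\|\mathbf{b}-\mathbf{a}\|_2$ for any two points of the product domain $D=[\mathbf{a},\mathbf{b}]$ (coordinatewise, $|x_i-s_i|\leq b_i-a_i$, so the diagonal is the diameter). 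These combine to give $\|\mathbf{x}-\mathbf{s}'\|_2^2-\|\mathbf{x}-\mathbf{s}\|_2^2\leq 2\|\mathbf{b}-\mathbf{a}\|_2\Delta Q+(\Delta Q)^2=2\big[\|\mathbf{b}-\mathbf{a}\|_2+\tfrac{\Delta Q}{2}\big]\Delta Q$, which is precisely the numerator of~\eqref{eq:sigma_m_multi}. This step is where the geometry of the box enters; the only care needed is tracking the $\tfrac{\Delta Q}{2}$ factor produced by the $\|\mathbf{v}\|_2^2$ term, and noting that bounding the two terms separately yields a valid (if slightly loose) sufficient condition even though their worst cases need not coincide.

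Finally I would combine the two bounds into $\ln\frac{p_B^{\mathbf{s}}(\mathbf{x})}{p_B^{\mathbf{s}'}(\mathbf{x})}\leq\ln\Delta C_m(\sigma_m,\mathbf{c}^*)+\frac{[\|\mathbf{b}-\mathbf{a}\|_2+\Delta Q/2]\Delta Q}{\sigma_m^2}$ and require the right-hand side to be at most $\epsilon$. Rearranging this inequality gives exactly the sufficient condition~\eqref{eq:sigma_m_multi}. The one subtlety to flag is the sign of the denominator: the rearrangement is valid only when $\epsilon-\ln\Delta C_m(\sigma_m,\mathbf{c}^*)>0$, the multivariate analogue of Lemmas~\ref{lem:denominator_greater_0} and~\ref{lem:delta_c_greater_0}, which must be verified for the chosen $\sigma_m$ just as in the univariate calculation of Section~\ref{sec:calculate_sigma}.
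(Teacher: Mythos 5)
Your proposal is correct and follows essentially the same route as the paper's own proof: both decompose the likelihood ratio into the normalizing-constant ratio, bounded by $\Delta C_m(\sigma_m,\mathbf{c}^*)$ via Lemma~\ref{lem:preli_result_3} and the optimization problem~\eqref{eq:objective_function}, and the Gaussian exponent ratio, bounded via Cauchy--Schwarz together with $\|\mathbf{c}\|_2\leq\Delta Q$ and the box diameter $\|\mathbf{b}-\mathbf{a}\|_2$. The only cosmetic difference is that you rearrange the logarithmic bound directly, whereas the paper introduces an intermediate parameter $\epsilon'=\epsilon-\ln\Delta C_m(\sigma_m,\mathbf{c}^*)$ following Dwork's argument; the algebra is identical.
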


\begin{proof}
    See Appendix~\ref{apdx:proof_main_result_2}.
\end{proof}

As with the univariate bounded Gaussian mechanism, we see that~$\sigma_m$ appears on both sides of~\eqref{eq:sigma_m_multi}. Hence, we will use numerical
methods to compute it. 

\subsection{Calculating \texorpdfstring{$\sigma_m$}{Lg}}
We will follow similar steps to those in Section~\ref{sec:calculate_sigma}. Let $\sigma_m^*$ be the minimal admissible value of~$\sigma_m$ that satisfies~\eqref{eq:sigma_m_multi}. 
Then for $D\subset \mathbb{R}^m$ we have
\begin{equation}
    f_m(\sigma_m^*)=(\sigma_m^*)^2 - \frac{\left[||\mathbf{b}-\mathbf{a}||_2+\frac{\Delta Q}{2}\right]\Delta Q}{\epsilon-\ln(\Delta C_m(\sigma_m^*,\mathbf{c}^*))}=0.\label{eq:f_n}
\end{equation}

We now present some lemmas concerning the function $f_n$ with respect to the 
point 
\begin{equation}
    \sigma_{m,0}=\sqrt{\frac{\left[||\mathbf{b}-\mathbf{a}||_2+\frac{\Delta Q}{2}\right]\Delta Q}{\epsilon}}.\label{eq:sigma_0_nd}
\end{equation}

\begin{lem}\label{lem:denominator_greater_0_nd}
    For any $\epsilon>0$, we have $\epsilon-\ln(\Delta C_m(\sigma_{m,0},\mathbf{c}^*))>0$, where $\mathbf{c}^*$ is defined in Equation~\eqref{eq:objective_function}.
\end{lem}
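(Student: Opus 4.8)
The plan is to rewrite the desired inequality as $\ln(\Delta C_m(\sigma_{m,0},\mathbf{c}^*)) < \epsilon$ and to produce an explicit upper bound on $\ln(\Delta C_m)$ that is strictly below $\epsilon$ when evaluated at the point $\sigma_{m,0}$ of Equation~\eqref{eq:sigma_0_nd}. The first step is to exploit that $\Sigma=\sigma_m^2 I$ is diagonal, so the Gaussian integral in the normalizer $C_m$ of Equation~\eqref{eq:definition_C_multidim} factorizes across coordinates. This gives
\[
\Delta C_m(\sigma_m,\mathbf{c}^*)=\frac{C_m(\mathbf{a},\sigma_m)}{C_m(\mathbf{a}+\mathbf{c}^*,\sigma_m)}=\prod_{i=1}^m \frac{\int_{a_i}^{b_i}\exp\!\left(-\frac{(x-a_i-c_i^*)^2}{2\sigma_m^2}\right)dx}{\int_{a_i}^{b_i}\exp\!\left(-\frac{(x-a_i)^2}{2\sigma_m^2}\right)dx},
\]
reducing the problem to a single per-coordinate estimate.

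For each coordinate I would substitute $u=x-a_i$ and factor the numerator integrand as $\exp(-u^2/2\sigma_m^2)\exp(u c_i^*/\sigma_m^2)\exp(-(c_i^*)^2/2\sigma_m^2)$. Feasibility of $\mathbf{c}^*$ for the program in Equation~\eqref{eq:objective_function} guarantees $0\le c_i^*\le b_i-a_i$, so the cross term $\exp(u c_i^*/\sigma_m^2)$ is maximized over $u\in[0,b_i-a_i]$ at $u=b_i-a_i$, while $\exp(-(c_i^*)^2/2\sigma_m^2)\le 1$. Pulling these constants out of the numerator integral bounds each factor by $\exp\!\left((b_i-a_i)c_i^*/\sigma_m^2\right)$; taking logarithms and summing over the product then yields $\ln(\Delta C_m(\sigma_m,\mathbf{c}^*))\le \langle\mathbf{b}-\mathbf{a},\mathbf{c}^*\rangle/\sigma_m^2$ for every $\sigma_m>0$.

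The next step applies Cauchy--Schwarz together with the remaining feasibility constraint $||\mathbf{c}^*||_2\le\Delta Q$ to get $\langle\mathbf{b}-\mathbf{a},\mathbf{c}^*\rangle\le ||\mathbf{b}-\mathbf{a}||_2\,||\mathbf{c}^*||_2\le ||\mathbf{b}-\mathbf{a}||_2\,\Delta Q$, and hence $\ln(\Delta C_m(\sigma_m,\mathbf{c}^*))\le ||\mathbf{b}-\mathbf{a}||_2\,\Delta Q/\sigma_m^2$. Evaluating at $\sigma_{m,0}$, where by Equation~\eqref{eq:sigma_0_nd} one has $\epsilon=\left[||\mathbf{b}-\mathbf{a}||_2+\tfrac{\Delta Q}{2}\right]\Delta Q/\sigma_{m,0}^2$, and using $\Delta Q>0$ to note that $||\mathbf{b}-\mathbf{a}||_2\,\Delta Q<\left[||\mathbf{b}-\mathbf{a}||_2+\tfrac{\Delta Q}{2}\right]\Delta Q$, I would conclude $\ln(\Delta C_m(\sigma_{m,0},\mathbf{c}^*))<\epsilon$, which is exactly the claim. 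This mirrors the univariate Lemma~\ref{lem:denominator_greater_0}, with $b-a$ replaced by $||\mathbf{b}-\mathbf{a}||_2$.

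The main obstacle is not any individual estimate but obtaining a bound tight enough to be strictly below $\epsilon$, rather than merely finite. A cruder midpoint-displacement estimate of the form $\exp(3||\mathbf{b}-\mathbf{a}||_2^2/8\sigma_m^2)$ is too loose: its exponent scales like $||\mathbf{b}-\mathbf{a}||_2^2$ and can exceed $\epsilon$ at $\sigma_{m,0}$ when $\Delta Q$ is small relative to $||\mathbf{b}-\mathbf{a}||_2$. Keeping the leading term at order $||\mathbf{b}-\mathbf{a}||_2\,\Delta Q$ through the Cauchy--Schwarz step against $||\mathbf{c}^*||_2\le\Delta Q$ is precisely what preserves the $\tfrac{\Delta Q^2}{2}$ of slack that forces strict inequality. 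The only other point requiring care is justifying that discarding the non-positive quadratic factor $\exp(-(c_i^*)^2/2\sigma_m^2)$ coordinate-wise is harmless, which holds because it only enlarges the right-hand side of the per-coordinate bound.
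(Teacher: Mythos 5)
Your proof is correct, and it takes a genuinely different route from the paper's. The paper keeps the $m$-dimensional integral intact: it expands the shifted quadratic, invokes the mean value theorem for integrals to replace the cross term $\exp\left((\mathbf{c}^*)^T\Sigma_0^{-1}(\mathbf{x}-\mathbf{a})\right)$ by its value at a point $\bm{\delta}\in[\mathbf{a},\mathbf{b}]$, obtaining the exact identity $\Delta C_m(\sigma_{m,0},\mathbf{c}^*)=\exp\left(\left(2(\mathbf{c}^*)^T(\bm{\delta}-\mathbf{a})-(\mathbf{c}^*)^T\mathbf{c}^*\right)/(2\sigma_{m,0}^2)\right)$, and then applies Cauchy--Schwarz plus the triangle inequality, landing exactly on $\exp(\epsilon)$; strictness there hinges on $\bm{\delta}$ lying in the \emph{open} box, which the paper asserts ($\delta_i\in(a_i,b_i)$) after having only established $\delta_i\in[a_i,b_i]$. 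You instead factor the normalizer coordinate-wise --- the same device the paper itself uses in Appendix~\ref{apdx:proof_delta_c_greater_0_nd} --- bound the cross term by its supremum $\exp\left((b_i-a_i)c_i^*/\sigma_m^2\right)$ over the interval, discard $\exp\left(-(c_i^*)^2/2\sigma_m^2\right)\le 1$, and only then apply Cauchy--Schwarz. This buys two things: (i) you avoid the mean value theorem entirely, which matters because the paper states that theorem only in one dimension (Theorem~\ref{thm:mean_value_thm}) and applies it to an $m$-fold integral without comment; and (ii) your strict inequality is automatic and quantitative, since you end with
\begin{equation*}
\ln\left(\Delta C_m(\sigma_{m,0},\mathbf{c}^*)\right)\le\frac{\|\mathbf{b}-\mathbf{a}\|_2}{\|\mathbf{b}-\mathbf{a}\|_2+\frac{\Delta Q}{2}}\,\epsilon<\epsilon,
\end{equation*}
a uniform multiplicative gap driven by the $\frac{\Delta Q}{2}$ slack, rather than a chain that closes to equality and depends on interiority of the MVT point. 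What the paper's route retains is an exact closed form (including the $-(\mathbf{c}^*)^T\mathbf{c}^*$ term) before any estimation, but it then gives that precision back through the triangle inequality, so your final bound is in fact the sharper of the two.
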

\begin{proof}
See Appendix~\ref{apdx:proof_denominator_greater_0_nd}.
\end{proof}

\begin{lem}\label{lem:delta_c_greater_0_nd}
    For $\sigma_{m,0}=\sqrt{\frac{\left[||\mathbf{b}-\mathbf{a}||_2+\frac{\Delta Q}{2}\right]\Delta Q}{\epsilon}}$, we have $\ln(\Delta C_m(\sigma_{m,0},\mathbf{c}^*))>0$.
\end{lem}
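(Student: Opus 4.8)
The plan is to establish the equivalent inequality $\Delta C_m(\sigma_{m,0},\mathbf{c}^*) > 1$; the argument in fact works verbatim for every $\sigma_m > 0$, so I would prove $\Delta C_m(\sigma_m,\mathbf{c}^*) > 1$ and specialize to $\sigma_m = \sigma_{m,0}$ at the end. First I would rewrite everything in terms of the captured-mass integral $M(\mathbf{s}) := 1/C_m(\mathbf{s},\sigma_m) = \int_\mathbf{a}^\mathbf{b}\exp(-\tfrac{1}{2}(\mathbf{x}-\mathbf{s})^T\Sigma^{-1}(\mathbf{x}-\mathbf{s}))\,d\mathbf{x}$, which by~\eqref{eq:definition_C_multidim} gives $\Delta C_m(\sigma_m,\mathbf{c}^*) = C_m(\mathbf{a},\sigma_m)/C_m(\mathbf{a}+\mathbf{c}^*,\sigma_m) = M(\mathbf{a}+\mathbf{c}^*)/M(\mathbf{a})$. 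Since $\mathbf{c}^*$ is by construction the maximizer of $\Delta C_m(\sigma_m,\mathbf{c})$ over the feasible set $\{\mathbf{c} : 0 \leq c_i \leq b_i - a_i,\ \|\mathbf{c}\|_2 \leq \Delta Q\}$ of~\eqref{eq:objective_function}, it suffices to exhibit a single feasible $\mathbf{c}$ with $M(\mathbf{a}+\mathbf{c}) > M(\mathbf{a})$; optimality of $\mathbf{c}^*$ then forces $M(\mathbf{a}+\mathbf{c}^*)/M(\mathbf{a}) > 1$.

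The key structural fact I would use is separability: because $\Sigma = \sigma_m^2 I$ the integrand factors across coordinates, so $M(\mathbf{s}) = \prod_{i=1}^m g_i(s_i)$ with $g_i(s_i) = \int_{a_i}^{b_i}\exp(-(t-s_i)^2/(2\sigma_m^2))\,dt$ the mass of a univariate Gaussian centered at $s_i$ that is captured inside $[a_i,b_i]$. I would then pick the one-coordinate perturbation $\mathbf{c} = \delta\,\mathbf{e}_1$, where $\mathbf{e}_1$ is the first standard basis vector and $\delta > 0$ is small; this is feasible as soon as $\delta \leq \min\{b_1 - a_1,\ \Delta Q\}$, which is possible because $\Delta Q > 0$ for any nondegenerate query. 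For this $\mathbf{c}$ all factors but the first cancel in the ratio, leaving $M(\mathbf{a}+\mathbf{c})/M(\mathbf{a}) = g_1(a_1+\delta)/g_1(a_1)$.

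It then remains to show that the captured mass strictly increases as the center is moved inward from the corner, i.e. that $g_1$ is strictly increasing at $a_1$. Differentiating under the integral and substituting $u = t - s_1$ yields the closed form $g_1'(s_1) = \exp(-(a_1-s_1)^2/(2\sigma_m^2)) - \exp(-(b_1-s_1)^2/(2\sigma_m^2))$, whence $g_1'(a_1) = 1 - \exp(-(b_1-a_1)^2/(2\sigma_m^2)) > 0$ since $a_1 < b_1$. Thus $g_1(a_1+\delta) > g_1(a_1)$ for small $\delta$, giving $M(\mathbf{a}+\mathbf{c}) > M(\mathbf{a})$ and hence $\Delta C_m(\sigma_m,\mathbf{c}^*) > 1$, i.e. $\ln(\Delta C_m(\sigma_m,\mathbf{c}^*)) > 0$; taking $\sigma_m = \sigma_{m,0}$ gives the claim. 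I expect the only real obstacle to be the strictness: the weak bound $\Delta C_m \geq 1$ is immediate from the feasibility of $\mathbf{c} = \mathbf{0}$, so the substance of the proof is producing a strictly improving feasible direction, which is precisely where the separable structure induced by $\Sigma = \sigma_m^2 I$ and the corner monotonicity $g_1'(a_1) > 0$ come in.
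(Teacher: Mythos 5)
Your proof is correct, and it takes a genuinely different route from the paper's. The paper also exploits separability from $\Sigma = \sigma_{m,0}^2 I$, but in a different place: it factors $\Delta C_m(\sigma_{m,0},\mathbf{c}^*)$ itself into the coordinatewise product $\prod_{i=1}^m g_i(a_i+c_i^*)/g_i(a_i)$ and then invokes its univariate argument (Appendix~\ref{apdx:proof_delta_c_greater_0}) factor by factor to conclude that every factor exceeds $1$. That route implicitly requires each coordinate $c_i^*$ of the maximizer to be strictly positive (and to lie in the range $\left(0,\frac{b_i-a_i}{2}\right]$ where the univariate proof is stated); if some $c_i^*$ were $0$, the corresponding factor would equal $1$, and the paper never verifies this positivity. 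You sidestep any analysis of $\mathbf{c}^*$ entirely: you use only that $\mathbf{c}^*$ maximizes $\Delta C_m(\sigma_m,\mathbf{c})$ over the feasible set of~\eqref{eq:objective_function}, reduce the claim to exhibiting one feasible, strictly improving perturbation $\mathbf{c}=\delta\,\mathbf{e}_1$, and then establish strict improvement from the corner derivative $g_1'(a_1)=1-\exp\left(-\frac{(b_1-a_1)^2}{2\sigma_m^2}\right)>0$. What your approach buys is robustness: no case analysis on how $\Delta Q$ compares to the box dimensions, and no implicit structural facts about the optimizer beyond feasibility and optimality. What the paper's approach buys is a quantitative statement (every factor is $>1$, not merely the product), which is the form it reuses elsewhere; its gap about $c_i^*>0$ is fixable by a KKT argument but is not written down. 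Both arguments, as you note, hold verbatim for every $\sigma_m>0$, with $\sigma_{m,0}$ playing no special role.
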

\begin{proof}
See Appendix~\ref{apdx:proof_delta_c_greater_0_nd}.
\end{proof}

\begin{lem}\label{lem:lemma_f_sigma_0_nd}
    For a point $\sigma_{m,0}=\sqrt{\frac{\left[||\mathbf{b}-\mathbf{a}||_2+\frac{\Delta Q}{2}\right]\Delta Q}{\epsilon}}$, we have $f_m(\sigma_{m,0})<0$.
\end{lem}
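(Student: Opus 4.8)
The plan is to follow exactly the strategy used in the proof of Lemma~\ref{lem:lemma_f_sigma_0}, since the multivariate function $f_m$ in Equation~\eqref{eq:f_n} has the same algebraic shape as its univariate counterpart. First I would substitute $\sigma_{m,0}$ into $f_m$. The key observation is that, by the definition of $\sigma_{m,0}$ in Equation~\eqref{eq:sigma_0_nd}, the quantity $(\sigma_{m,0})^2$ equals precisely $\frac{\left[||\mathbf{b}-\mathbf{a}||_2+\frac{\Delta Q}{2}\right]\Delta Q}{\epsilon}$. Hence the evaluation reduces to
\begin{equation*}
    f_m(\sigma_{m,0})=\frac{\left[||\mathbf{b}-\mathbf{a}||_2+\frac{\Delta Q}{2}\right]\Delta Q}{\epsilon}-\frac{\left[||\mathbf{b}-\mathbf{a}||_2+\frac{\Delta Q}{2}\right]\Delta Q}{\epsilon-\ln(\Delta C_m(\sigma_{m,0},\mathbf{c}^*))},
\end{equation*}
which is a difference of two fractions sharing the common numerator $\left[||\mathbf{b}-\mathbf{a}||_2+\frac{\Delta Q}{2}\right]\Delta Q$. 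This numerator is strictly positive because $\Delta Q>0$ and $||\mathbf{b}-\mathbf{a}||_2>0$.

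The remaining step is to compare the two denominators. I would invoke Lemma~\ref{lem:delta_c_greater_0_nd}, which gives $\ln(\Delta C_m(\sigma_{m,0},\mathbf{c}^*))>0$, to conclude that $\epsilon-\ln(\Delta C_m(\sigma_{m,0},\mathbf{c}^*))<\epsilon$, and Lemma~\ref{lem:denominator_greater_0_nd}, which ensures $\epsilon-\ln(\Delta C_m(\sigma_{m,0},\mathbf{c}^*))>0$. Together these yield the chain $\epsilon>\epsilon-\ln(\Delta C_m(\sigma_{m,0},\mathbf{c}^*))>0$. Since both denominators are positive and the denominator of the second fraction is strictly smaller, the second fraction strictly exceeds the first, so their difference $f_m(\sigma_{m,0})$ is strictly negative, as claimed.

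I do not anticipate a genuine obstacle here, since the analytic content has already been absorbed into the two cited lemmas and the argument is purely an ordering-of-fractions computation. The one point requiring care is guaranteeing that $\epsilon-\ln(\Delta C_m(\sigma_{m,0},\mathbf{c}^*))$ stays positive; otherwise subtracting the logarithm could flip the sign of the denominator and invalidate the comparison. This is exactly what Lemma~\ref{lem:denominator_greater_0_nd} supplies. Consequently the proof is essentially a transcription of the univariate argument with $b-a$ replaced by $||\mathbf{b}-\mathbf{a}||_2$, $\sigma_0$ replaced by $\sigma_{m,0}$, and $\Delta C$ replaced by $\Delta C_m(\cdot,\mathbf{c}^*)$.
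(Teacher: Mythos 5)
Your proposal is correct and matches the paper's own proof essentially verbatim: both plug $\sigma_{m,0}$ into $f_m$, obtain the difference of two fractions with the common numerator $\left[||\mathbf{b}-\mathbf{a}||_2+\frac{\Delta Q}{2}\right]\Delta Q$, and then invoke Lemma~\ref{lem:denominator_greater_0_nd} together with Lemma~\ref{lem:delta_c_greater_0_nd} to establish the chain $\epsilon>\epsilon-\ln(\Delta C_m(\sigma_{m,0},\mathbf{c}^*))>0$ and conclude $f_m(\sigma_{m,0})<0$. Your added remark about why the positivity of the denominator is the crucial safeguard is a fair, if slightly more explicit, restatement of what the paper's argument already relies on.
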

\begin{proof}
By plugging in $\sigma_{m,0}$ we have
\begin{align*}
    f_m(\sigma_{m,0})=\frac{\left[||\mathbf{b}-\mathbf{a}||_2+\frac{\Delta Q}{2}\right]\Delta Q}{\epsilon}-\frac{\left[||\mathbf{b}-\mathbf{a}||_2+\frac{\Delta Q}{2}\right]\Delta Q}{\epsilon-\ln(\Delta C_m(\sigma_{m,0},\mathbf{c}^*))}.
\end{align*}
With Lemma~\ref{lem:denominator_greater_0_nd} and Lemma~\ref{lem:delta_c_greater_0_nd} we have $\epsilon>\epsilon-\ln(\Delta C_m(\sigma_{m,0},\mathbf{c}^*))>0$. Therefore $f_m(\sigma_{m,0})<0$.
\end{proof}

\begin{lem}\label{lem:lemma_f_decreasing_nd}
    For any $\sigma_m\in[\sigma_{m,0},\infty)$, we have $f_m'(\sigma)>0$.
\end{lem}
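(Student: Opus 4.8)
The plan is to differentiate $f_m$ directly and reduce the claim to a single monotonicity statement about $\ln\Delta C_m$ in $\sigma_m$. Writing $K=\big[\|\mathbf{b}-\mathbf{a}\|_2+\tfrac{\Delta Q}{2}\big]\Delta Q>0$ and $g(\sigma_m)=\epsilon-\ln\!\big(\Delta C_m(\sigma_m,\mathbf{c}^*)\big)$, we have $f_m(\sigma_m)=\sigma_m^2-K/g(\sigma_m)$, so
\[
f_m'(\sigma_m)=2\sigma_m+\frac{K}{g(\sigma_m)^2}\,g'(\sigma_m).
\]
On $[\sigma_{m,0},\infty)$ we have $g(\sigma_{m,0})>0$ by Lemma~\ref{lem:denominator_greater_0_nd}, and once we show $g$ is nondecreasing there it follows that $g>0$ on the whole interval, so $K/g^2>0$. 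Hence it suffices to establish $g'(\sigma_m)\ge 0$, equivalently $\frac{d}{d\sigma_m}\ln\!\big(\Delta C_m(\sigma_m,\mathbf{c}^*)\big)\le 0$; the strict term $2\sigma_m>0$ then gives $f_m'(\sigma_m)>0$.

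The first complication, absent from the univariate Lemma~\ref{lem:lemma_f_decreasing} (where the optimal shift is the constant $\min\{\Delta Q,(b-a)/2\}$), is that here $\mathbf{c}^*$ is itself a function of $\sigma_m$ through the optimization~\eqref{eq:objective_function}. Because the feasible set $\{\mathbf{c}:0\le c_i\le b_i-a_i,\ \|\mathbf{c}\|_2\le\Delta Q\}$ does not depend on $\sigma_m$ and the objective is concave, I would invoke the envelope (Danskin) theorem to obtain $\frac{d}{d\sigma_m}\ln\!\big(\Delta C_m(\sigma_m,\mathbf{c}^*)\big)=\frac{\partial}{\partial\sigma_m}\ln\!\big(\Delta C_m(\sigma_m,\mathbf{c})\big)\big|_{\mathbf{c}=\mathbf{c}^*}$, so that it is enough to bound this partial derivative at a fixed $\mathbf{c}$. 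Since $\Sigma=\sigma_m^2 I$, the box integral in~\eqref{eq:definition_C_multidim} factorizes over coordinates, giving $\ln\Delta C_m(\sigma_m,\mathbf{c})=\sum_{i=1}^m\big[\ln Z_i(a_i+c_i,\sigma_m)-\ln Z_i(a_i,\sigma_m)\big]$ with $Z_i(t,\sigma_m)=\int_{a_i}^{b_i}e^{-(x-t)^2/(2\sigma_m^2)}\,dx$. Thus it suffices to show each univariate summand is nonincreasing in $\sigma_m$.

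For one coordinate I would set $L=b_i-a_i$, $c=c_i\in[0,L]$, rescale by $t=(x-\cdot)/\sigma_m$, and abbreviate $G(w):=\int_0^w e^{-s^2/2}\,ds$ and $r=1/\sigma_m$, so the summand becomes $\ln\!\big(G((L-c)r)+G(cr)\big)-\ln G(Lr)$. With $w_1=(L-c)r$, $w_2=cr$, $w=w_1+w_2=Lr$ and $\gamma(w):=\tfrac{we^{-w^2/2}}{G(w)}$, a short computation gives
\[
r\,\frac{d}{dr}\Big[\ln\!\big(G(w_1)+G(w_2)\big)-\ln G(w)\Big]=\frac{\gamma(w_1)G(w_1)+\gamma(w_2)G(w_2)}{G(w_1)+G(w_2)}-\gamma(w),
\]
a convex combination of $\gamma(w_1),\gamma(w_2)$ minus $\gamma(w)$. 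The heart of the argument is that $\gamma$ is strictly decreasing on $(0,\infty)$: with $\chi(w):=we^{-w^2/2}-(1-w^2)G(w)$ one checks $\chi(0)=0$ and $\chi'(w)=2w\,G(w)>0$, so $\chi>0$ and $\gamma'(w)=\tfrac{e^{-w^2/2}}{G(w)^2}\big(-\chi(w)\big)<0$. Since $w_1,w_2\le w$, both $\gamma(w_1),\gamma(w_2)\ge\gamma(w)$, so the convex combination is $\ge\gamma(w)$ and the displayed $r$-derivative is $\ge 0$; as $r=1/\sigma_m$ decreases in $\sigma_m$, the summand is nonincreasing in $\sigma_m$ (the degenerate cases $c\in\{0,L\}$ give a constant summand). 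Summing over $i$ yields $\frac{\partial}{\partial\sigma_m}\ln\Delta C_m\le 0$, which closes the chain from the first paragraph.

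The main obstacle is the $\sigma_m$-dependence of $\mathbf{c}^*$, which the envelope theorem reduces to a fixed-$\mathbf{c}$ computation; everything else then collapses, via the product structure and the rescaling, to the scalar fact that $\gamma(w)=we^{-w^2/2}/\int_0^w e^{-s^2/2}\,ds$ is decreasing. If one prefers to avoid differentiating the value function, the same monotonicity can be obtained directly: for fixed $\mathbf{c}$ the objective is nonincreasing in $\sigma_m$, so $V(\sigma_m)=\max_{\mathbf{c}}\ln\Delta C_m(\sigma_m,\mathbf{c})$ satisfies $V(\sigma_2)\le\ln\Delta C_m(\sigma_1,\mathbf{c}^*(\sigma_2))\le V(\sigma_1)$ for $\sigma_1<\sigma_2$, whence $g=\epsilon-V$ is nondecreasing and $f_m$ is strictly increasing on $[\sigma_{m,0},\infty)$.
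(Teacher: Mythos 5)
Your proof is correct, and although it shares the paper's skeleton (differentiate $f_m$, reduce to showing $\ln\Delta C_m(\cdot,\mathbf{c}^*)$ is nonincreasing in $\sigma_m$, and exploit the coordinate-wise factorization of the box integral), it differs in two substantive ways. First, where the paper disposes of each univariate factor by citing its earlier computation in Appendix~\ref{apdx:proof_lemma_f_decreasing} (a Leibniz-rule and mean-value-theorem manipulation involving a point $\delta\in(a,b)$), you give a self-contained scalar argument: rescaling by $r=1/\sigma_m$ turns each factor into $\ln\bigl(G(w_1)+G(w_2)\bigr)-\ln G(w_1+w_2)$, and the whole question collapses to the monotonicity of $\gamma(w)=we^{-w^2/2}/\int_0^w e^{-s^2/2}\,ds$, which you verify cleanly via the auxiliary function $\chi$ with $\chi(0)=0$, $\chi'(w)=2wG(w)>0$. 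Second, and more importantly, you address a point the paper's proof silently skips: $\mathbf{c}^*$ is defined through the optimization in~\eqref{eq:objective_function} and therefore depends on $\sigma_m$ (the paper's own Algorithm~\ref{alg:zero_finding_algorithm_2} recomputes it at every iteration), so differentiating $\Delta C_m(\sigma_m,\mathbf{c}^*)$ as though $\mathbf{c}^*$ were constant, as the paper does, is not justified as written. Your envelope-theorem reduction handles this, and your fallback sandwich argument $V(\sigma_2)\le\ln\Delta C_m(\sigma_1,\mathbf{c}^*(\sigma_2))\le V(\sigma_1)$ closes the gap with no differentiability or uniqueness assumptions on the maximizer at all; the only cost is that along that route you conclude strict monotonicity of $f_m$ rather than literally $f_m'>0$, which is exactly the property the bisection algorithm needs anyway. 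In short: same architecture, but your univariate ingredient is more elementary and your treatment of $\mathbf{c}^*$ is more rigorous than the paper's.
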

\begin{proof}
See Appendix~\ref{apdx:proof_lemma_f_decreasing_nd}.
\end{proof}
By combining Lemma~\ref{lem:lemma_f_sigma_0_nd} and Lemma~\ref{lem:lemma_f_decreasing_nd} we can use Algorithm 1 in~\cite{holohan2018bounded} to find the exact fixed point of $f_m$, namely $\sigma_m^*$, which is shown in Algorithm~\ref{alg:zero_finding_algorithm_2}.
The output of Algorithm~\ref{alg:zero_finding_algorithm_2} can be combined with
the mechanism in Definition~\ref{def:bounded_guassian_nd}, and 
Theorem~\ref{thm:main_result_2} shows that this combination gives~$\epsilon$-differential privacy
for multivariate queries. 

\begin{algorithm}
\SetAlgoLined
\SetKwInOut{Input}{Input}
\SetKwInOut{Output}{Output}
\Input{Function $f_m$ given in Equation~\eqref{eq:f_n}, Initial privacy parameter $\sigma_{m,0}$ given in Equation~\eqref{eq:sigma_0_nd}.}
\Output{Optimal privacy parameter~$(\sigma_m^*)^2$.}
left$=\sigma_{m,0}^2$\;
Compute $c^*$ from the optimization problem in~\eqref{eq:objective_function}\;
rignt$=\frac{\left[||\mathbf{b-a}||_2+\frac{\Delta Q}{2}\right]\Delta Q}{\epsilon-\ln(\Delta C_m(\sigma_{m,0},\mathbf{c}^*))}$\;
intervalSize$=($left$+$right$)*2$\;
\While{intervalSize$>$right$-$left}{
    intervalSize $=$ right $-$ left\;
    $(\sigma_m^*)^2=\frac{\text{left}+\text{right}}{2}$\;
    Compute $c^*$ from the optimization problem in \eqref{eq:objective_function}\;
    \If{$\frac{\left[||b-a||_2+\frac{\Delta Q}{2}\right]\Delta Q}{\epsilon-\ln(\Delta C_m(\sigma_m^*,\mathbf{c}^*))}\geq(\sigma_m^*)^2$}{
        left $=(\sigma_m^*)^2$\;
    }
    \If{$\frac{\left[||\mathbf{b-a}||+\frac{\Delta Q}{2}\right]\Delta Q}{\epsilon-\ln(\Delta C_m(\sigma_m^*,\mathbf{c}^*))}\leq(\sigma_m^*)^2$}{
        right $=(\sigma_m^*)^2$\;
    }
}
\Return $(\sigma_m^*)^2$
\caption{A Zero Finding Algorithm for Multivariate Bounded Gaussian Mechanism}
 \label{alg:zero_finding_algorithm_2}
\end{algorithm}

\section{Numerical Results}\label{S:algo}
This section presents simulation results. 
We consider queries of the properties of a graph. Specifically, the bounded data we consider is (i) the average degree in an undirected graph, and (ii)
the second-smallest eigenvalue of the Laplacian of an undirected graph. This eigenvalue is also called the Fiedler value~\cite{fiedler73} of the graph. It is known that an undirected graph
is connected if and only if its Fiedler value is positive~\cite{fiedler73}, 
and the Fiedler value also sets the rate of convergence of various 
dynamical processes over graphs~\cite{mesbahi10}. 

We let $G=(V,E)$ denote a connected, undirected graph on $10$ nodes. In this experiment we have a two-dimensional query $Q$ to compute (i) the algebraic connectivity, equal to the second smallest eigenvalue $\lambda_{2}$ of the Laplacian matrix of~$G$, which satisfies $\lambda_{2}\in[0,10]$ here, and (ii) the degree of a fixed but arbitrary node $i$,
denoted~$N^i$, which satisfies $N^i \in [1,9]$ here. 
Therefore, for a graph~$G$ we have $Q(G)=[\lambda_{2},N_{\text{out}}]^T\in D$, where $D= [\mathbf{a,b}]\subset\mathbf{R}^2$, with $\mathbf{a}=[0,1]^T$ and $\mathbf{b}=[10,9]^T$. For a fixed~$k \in \mathbb{N}$, we say that two graphs are adjacent if they have the same node set but differ in $k$ edges. 
Let $\Delta Q_{\lambda_2}$ denote the sensitivity of $\lambda_2$, let $\Delta Q_{N^i}$ denote the sensitivity of node $i$'s degree, and let $\Delta Q=||(\Delta Q_{\lambda_2},\Delta Q_{N^i})^T||_2$ denote the sensitivity of the query $Q$. From~\cite[Lemma 1]{chen2021} we have $\Delta Q_{\lambda_2}=2k$ and $\Delta Q_{N^i}=k$. 
Thus,~$\Delta Q = \sqrt{5}k$. 
In our simulations we set $k=2$.
Table~\ref{table:1} gives some example variances computed for using the multivariate
bounded Gaussian mechanism. In Figure~\ref{fig:Normal_distribution}, there is a general decrease in the variance of the bounded Gaussian mechanism as $\epsilon$ grows. This 
decrease 
agrees with intuition because a larger $\epsilon$ implies weaker privacy protection and thus the private output distribution has a higher peak on the true query answer. 

We now compare the proposed mechanism with generalized Gaussian mechanism~\cite{Liu2019}. Let $\sigma_{GG}^2$ be the variance of generalized Gaussian mechanism and $\sigma_{BG}^2$ be the bounded Gaussian mechanism. Then we define the Percent Reduction in variance as
\begin{equation*}
    \text{Percent Reduction} = \frac{\sigma_{GG}^2-\sigma_{BG}^2}{\sigma_{GG}^2}\times 100\%.
\end{equation*}

Both Figure~\ref{fig:Normal_distribution} and Table~\ref{table:1} show that the bounded Gaussian mechanism always generates smaller variance, i.e., $\sigma_{BG}^2<\sigma_{GG}^2$ and
the $\text{Percent Reduction}>0$ for all $\epsilon$. In other words, compared to the 
generalized Gaussian mechanism~\cite{Liu2019}, 
the bounded Gaussian mechanism generates private outputs with less noise but with the same level of privacy protection. 


\begin{table}
\begin{tabular}{|c|c|c|c|} 
 \hline
 $\epsilon$ & Generalized Gaussian $\sigma_{GG}^2$ & Bounded Gaussian $\sigma_{BG}^2$ & Percent Reduction \\ [0.5ex] 
 \hline\hline
 0.1 & 1320.0 & 857.5 & 35.0\% \\ 
 \hline
 0.5 & 264.0 & 170.3 & 35.5\% \\
 \hline
 1.0 & 132.0 & 84.3 & 36.1\% \\
 \hline
 1.5 & 88 & 55.8 & 36.6\% \\
 \hline
 2.0 & 66 & 41.5 & 37.2\% \\ 
 \hline
 2.5 & 52.8 & 32.9 & 37.7\% \\ 
 \hline
 3.0 & 44 & 27.2 & 38.2\% \\[1ex] 
 \hline
\end{tabular}
\caption{Some example values of $\sigma_{BG}$ with different values of $\epsilon$. The last column shows the percentage reduction in variance attained by using the bounded Gaussian mechanism we develop.} 
\label{table:1}
\end{table}

\begin{figure}[h]
    \centering
    \includegraphics{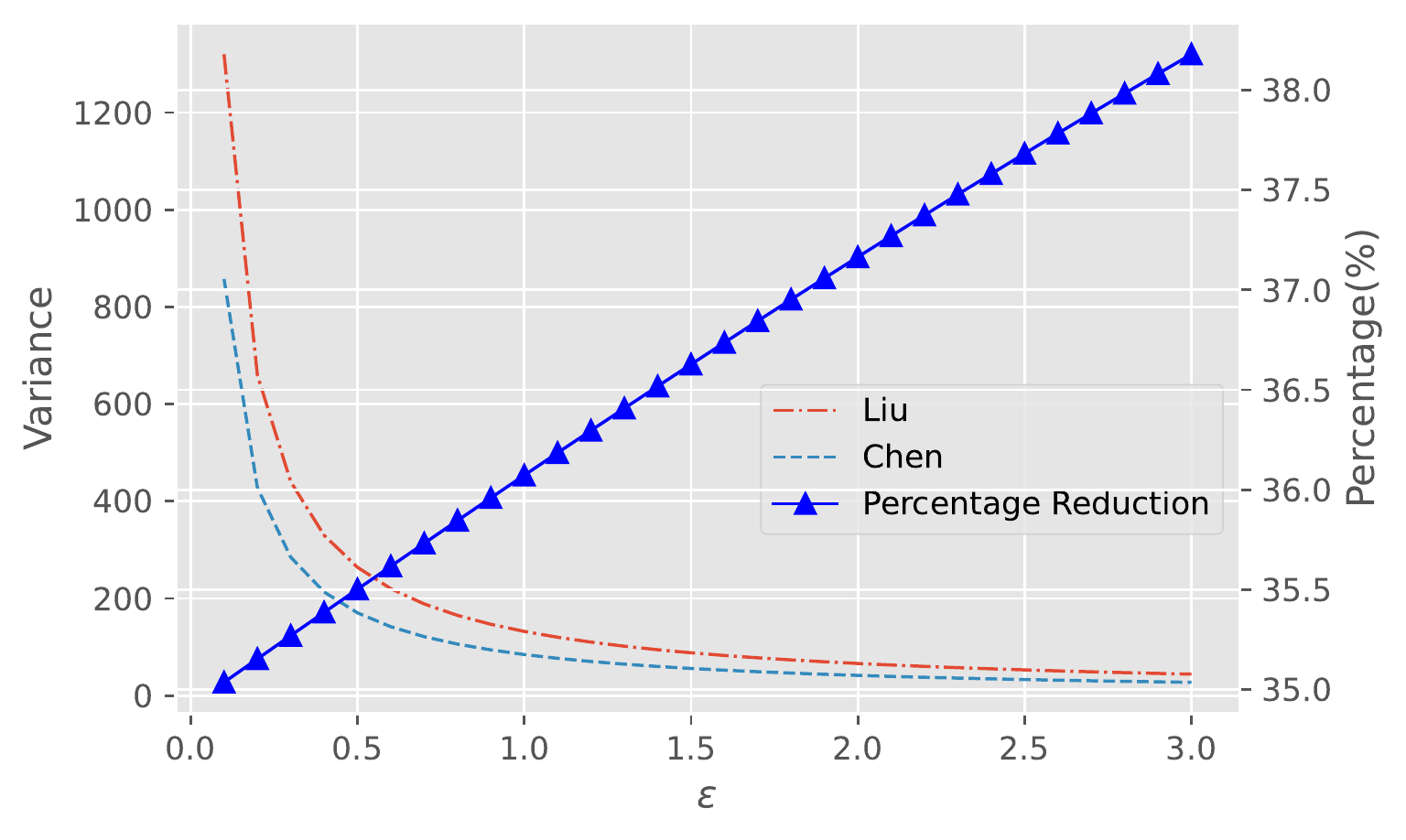}
    \caption{The variance comparison of proposed machanism and Liu's mechanism from~\cite[Definition 5]{Liu2019}. 
    With a larger $\epsilon$, which gives weaker privacy, the variance of bounded Gaussian mechanism decreases quickly and is always smaller than 
    that of the generalized Gaussian mechanism. This means that the bounded Gaussian mechanisms require less noise and provide better accuracy while maintaining the same level of protection. 
    In addition, the blue triangles ascent from left to right, indicating that the reduction in variance grows as~$\epsilon$ grows. 
    } 
    \label{fig:Normal_distribution}
\end{figure}

\section{Conclusion}\label{S:item}
This paper presented two differential privacy mechanisms, namely the univariate and multivariate bounded Gaussian mechanisms, 
for bounded domain queries of a database of sensitive data. Compared to the existing generalized Gaussian mechanism, the bounded Gaussian mechanisms we present generate private outputs with less noise
and better accuracy for the same privacy level. Future work will apply this mechanism to real world applications, such as privately forecasting 
epidemic propagation, federated learning and optimization, and explore privacy and performance trade-off. Besides, future work will design a denoise post-processing procedure to generate more accurate private outputs. 



\bibliography{sample}
\bibliographystyle{abbrvnat}

\appendix
\section{Proof of Lemma~\ref{lem:preli_result_1}}\label{apdx:proof_preli_result_1}
By symmetry we can write $s'=s+c$. We proceed in this proof by showing that $\frac{C(s,\sigma)}{C(s+c,\sigma)}$ is monotonically decreasing with respect to $s$. In other words, we show $\frac{\partial}{\partial s}\left(\frac{C(s,\sigma)}{C(s+c,\sigma)}\right)<0$. We first note that
\begin{align}
    \frac{C(s,\sigma)}{C(s+c,\sigma)} &= \frac{\Phi\left(\frac{b-s-c}{\sigma}\right)-\Phi\left(\frac{a-s-c}{\sigma}\right)}{\Phi\left(\frac{b-s}{\sigma}\right)-\Phi\left(\frac{a-s}{\sigma}\right)} \nonumber\\
    &= \frac{\int_a^b \phi\left(\frac{x-s-c}{\sigma}\right)dx}{\int_a^b \phi\left(\frac{x-s}{\sigma}\right)dx}\nonumber \\
    &= \frac{\int_a^b\exp{\left(-\frac{1}{2}\left(\frac{x-s-c}{\sigma}\right)^2\right)}dx}{\int_a^b\exp{\left(-\frac{1}{2}\left(\frac{x-s}{\sigma}\right)^2\right)}dx}\nonumber\\
    &= \frac{\int_{a-s}^{b-s}\exp{\left(-\frac{1}{2}\left(\frac{y-c}{\sigma}\right)^2\right)}dy}{\int_a^b\exp{\left(-\frac{1}{2}\left(\frac{x-s}{\sigma}\right)^2\right)}dx},\label{eq:1}
\end{align}
where $y=x-s$. We now introduce the Leibniz rule that we will use later in the proof. 
\begin{thm}[Leibniz Integral Rule]\label{thm:Leibniz_Rule}
    Let $g(x,t)$ be a function such that both $g(x,t)$ and its partial derivative $g_x(x,t)$ are continuous in $t$ and $x$ in some region of the $xt$-plane, including $f_1(x)\leq t\leq f_2(x)$, $x_0\leq x\leq x_1$. Also suppose that the functions $f_1(x)$ and $f_2(x)$ are both continuous and both have continuous derivatives for $x_0\leq x\leq x_1$. Then for $x_0\leq x\leq x_1$,
    \begin{equation*}
        \frac{d}{dx}\left(\int_{f_1(x)}^{f_2(x)}g(x,t)dt\right)=g(x,f_2(x))\cdot\frac{d}{dx}f_2(x) - g(x,f_1(x))\cdot\frac{d}{dx}f_1(x)+\int_{f_1(x)}^{f_2(x)}\frac{d}{dx}g(x,t)dt.
    \end{equation*}
\end{thm}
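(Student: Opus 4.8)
The plan is to reduce this variable-limit differentiation to two ingredients, the Fundamental Theorem of Calculus and ordinary differentiation under a fixed-limit integral, glued together by the multivariate chain rule. First I would introduce the auxiliary function of three independent variables
\[
    F(x,u,v) = \int_u^v g(x,t)\,dt,
\]
defined for $x_0 \le x \le x_1$ and for $(u,v)$ in the range where the continuity hypotheses on $g$ and $g_x$ hold. The quantity in the statement is then exactly the composition $x \mapsto F\bigl(x, f_1(x), f_2(x)\bigr)$, and the chain rule gives
\[
    \frac{d}{dx} F\bigl(x, f_1(x), f_2(x)\bigr) = F_x + F_u\, f_1'(x) + F_v\, f_2'(x),
\]
where the partials are evaluated at $\bigl(x, f_1(x), f_2(x)\bigr)$.

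Next I would compute the three partial derivatives separately. Since $g(x,\cdot)$ is continuous in $t$, the Fundamental Theorem of Calculus yields $F_v(x,u,v) = g(x,v)$ and $F_u(x,u,v) = -g(x,u)$, the minus sign arising because $u$ is the lower limit; evaluating at $v=f_2(x)$ and $u=f_1(x)$ recovers the two boundary terms $g(x,f_2(x))\,f_2'(x)$ and $-g(x,f_1(x))\,f_1'(x)$. For the remaining partial I would hold $u$ and $v$ fixed and establish differentiation under the integral sign, $F_x(x,u,v) = \int_u^v g_x(x,t)\,dt$, which after evaluation at the moving limits supplies the final integral term. Adding the three contributions reproduces the claimed identity.

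The main obstacle is the rigorous justification of the fixed-limit step $\partial_x \int_u^v g(x,t)\,dt = \int_u^v g_x(x,t)\,dt$, together with the joint differentiability of $F$ that the chain rule quietly presumes. I would treat the fixed-limit step through the difference quotient $\frac{1}{h}\int_u^v \bigl(g(x+h,t)-g(x,t)\bigr)\,dt$, applying the mean value theorem in $x$ to rewrite the integrand as $g_x(\xi_{t,h},t)$ for some $\xi_{t,h}$ between $x$ and $x+h$, and then letting $h\to 0$ inside the integral. The interchange of limit and integral is valid because $g_x$, being continuous on the relevant compact rectangle, is uniformly continuous there, so $g_x(\xi_{t,h},t)\to g_x(x,t)$ uniformly in $t$. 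The same continuity hypotheses make $F_x$, $F_u$, and $F_v$ continuous, which is exactly what legitimizes treating $F$ as totally differentiable and invoking the chain rule in the first step.
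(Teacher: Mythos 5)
The paper never proves this statement: Theorem~\ref{thm:Leibniz_Rule} is stated in Appendix~\ref{apdx:proof_preli_result_1} purely as a classical tool, invoked without proof in order to differentiate the ratio $C(s,\sigma)/C(s+c,\sigma)$, so there is no in-paper argument to compare yours against. Your proof is the standard textbook derivation and it is correct. You reduce the claim to the composition $x\mapsto F\bigl(x,f_1(x),f_2(x)\bigr)$ with $F(x,u,v)=\int_u^v g(x,t)\,dt$, obtain $F_v(x,u,v)=g(x,v)$ and $F_u(x,u,v)=-g(x,u)$ from the Fundamental Theorem of Calculus (continuity of $g$ in $t$ suffices), and obtain $F_x(x,u,v)=\int_u^v g_x(x,t)\,dt$ by passing to the limit in the difference quotient, using the mean value theorem in $x$ together with uniform continuity of $g_x$ on a compact set to justify the interchange of limit and integral. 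You also close the step most presentations leave implicit: the chain rule requires total differentiability of $F$, and you supply it by noting that all three partials are continuous, so $F$ is $C^1$. The only cosmetic caveat is that the theorem's hypothesis speaks of ``some region'' containing the set $f_1(x)\leq t\leq f_2(x)$, $x_0\leq x\leq x_1$; to run your compactness and uniform-continuity argument you should fix a compact rectangle inside that region containing all segments $\{x\}\times[f_1(x),f_2(x)]$, which exists by continuity of $f_1$ and $f_2$, but this does not affect correctness. The net effect of your approach is to make the appendix self-contained where the paper simply defers to a known classical result.
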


Let $A(s)=\int_a^b\exp{\left(-\frac{1}{2}\left(\frac{x-s}{\sigma}\right)^2\right)}dx$. Using Theorem~\ref{thm:Leibniz_Rule}, the derivative of Equation~\eqref{eq:1} becomes
\begin{align}
    \frac{\partial}{\partial s} \left(\frac{C(s,\sigma)}{C(s+c,\sigma)}\right)&=\frac{[\exp(-\frac{(a-c-s)^2}{2\sigma^2})-\exp(-\frac{(b-c-s)^2}{2\sigma^2})]A(s)}{A(s)^2}  \nonumber\\
    & \quad -\frac{[\exp(-\frac{(a-s)^2}{2\sigma^2})-\exp(-\frac{(b-s)^2}{2\sigma^2})]\int_a^b\exp(-\frac{(x-c-s)^2}{2\sigma^2})dx}{A(s)^2}.\label{eq:2}
\end{align}

We next introduce the mean value theorem of integrals.
\begin{thm}[Mean Value Theorem of Integrals]\label{thm:mean_value_thm}
    For a continuous function $h:(a,b)\to \mathbb{R}$ and an integrable function $g$ that does not change sign on $[a,b]$, there exists $\alpha\in[a,b]$ such that
    \begin{equation*}
        \int_a^bh(x)g(x)dx = h(\alpha)\int_a^bg(x)dx.
    \end{equation*}
\end{thm}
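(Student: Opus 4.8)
The plan is to prove this as the classical weighted first mean value theorem for integrals, reducing it to a statement about where a continuous function attains an intermediate value. First I would assume without loss of generality that $g \geq 0$ on $[a,b]$: since $g$ does not change sign, the only other possibility is $g \leq 0$, and in that case replacing $g$ by $-g$ negates both $\int_a^b h g$ and $\int_a^b g$, leaving the claimed identity unaffected. With $g \geq 0$ fixed, I would invoke the Extreme Value Theorem on the continuous function $h$ over the closed interval $[a,b]$ to obtain its minimum $m$ and maximum $M$, giving $m \leq h(x) \leq M$ for every $x$. Multiplying these inequalities by the nonnegative factor $g(x)$ preserves their direction, yielding the pointwise sandwich $m\,g(x) \leq h(x)g(x) \leq M\,g(x)$.

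Next I would integrate the sandwich over $[a,b]$, using monotonicity of the integral, to obtain
\[
    m\int_a^b g(x)\,dx \;\leq\; \int_a^b h(x)g(x)\,dx \;\leq\; M\int_a^b g(x)\,dx.
\]
Here a case split is needed. If $\int_a^b g(x)\,dx = 0$, the outer bounds collapse and force $\int_a^b h(x)g(x)\,dx = 0$ as well, so the desired identity holds trivially for any choice of $\alpha \in [a,b]$. If instead $\int_a^b g(x)\,dx > 0$, I would divide through to define the weighted average $\mu = \big(\int_a^b h g\big)\big/\big(\int_a^b g\big)$, which then satisfies $m \leq \mu \leq M$. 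Since $m$ and $M$ are values actually attained by the continuous function $h$, the Intermediate Value Theorem supplies a point $\alpha \in [a,b]$ with $h(\alpha) = \mu$, and clearing the denominator gives $\int_a^b h g = h(\alpha)\int_a^b g$, as required.

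The main subtlety, and the step I would be most careful about, is the degenerate case $\int_a^b g = 0$: there the quotient defining $\mu$ is undefined, so the Intermediate Value Theorem cannot be applied directly and one must instead observe that both sides of the identity vanish. A secondary technical point is that the statement only posits continuity of $h$ on the open interval $(a,b)$, whereas the Extreme Value and Intermediate Value Theorems are applied on the closed $[a,b]$; I would resolve this by assuming, as is standard for this first mean value theorem and as the intended usage in the proof of Lemma~\ref{lem:preli_result_1} requires, that $h$ is continuous on the closed interval, or equivalently extends continuously to its endpoints. No convergence or measure-theoretic machinery is needed beyond monotonicity of the integral, so the argument is elementary once these two edge cases are handled.
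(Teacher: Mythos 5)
Your proof is correct, but there is no paper proof to compare it against: the paper introduces this theorem mid-argument (alongside the Leibniz rule) as a known classical tool for proving Lemma~\ref{lem:preli_result_1} and never proves it. Your argument is the standard textbook proof of the first mean value theorem for integrals: reduce to $g \geq 0$, use the Extreme Value Theorem to sandwich $\int_a^b h(x)g(x)\,dx$ between $m\int_a^b g(x)\,dx$ and $M\int_a^b g(x)\,dx$, and apply the Intermediate Value Theorem to the weighted average. You also correctly identify the two genuine edge cases. First, the degenerate case $\int_a^b g(x)\,dx = 0$, where both sides vanish and any $\alpha$ works; this case must be split off since the weighted average is otherwise undefined. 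Second, the mismatch between the paper's stated hypothesis (continuity of $h$ only on the open interval $(a,b)$) and what the argument requires (continuity on the closed interval $[a,b]$): this is a genuine imprecision in the paper's statement, since a function continuous and unbounded on $(a,b)$, such as $h(x) = 1/(x-a)$ with $g \equiv 1$, defeats the conclusion. In every place the paper invokes the theorem, $h$ is a Gaussian-type exponential that is continuous on all of $\mathbb{R}$, so the closed-interval hypothesis you substitute is satisfied and your proof covers all of the paper's uses of the result.
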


Using Theorem~\ref{thm:mean_value_thm}, we have
\begin{align}
    \int_a^b\exp\left(-\frac{(x-c-s)^2}{2\sigma^2}\right)dx &= \int_a^b\exp\left(-\frac{1}{2\sigma^2}((x-s)^2-2(x-s)c+c^2)\right)dx \nonumber\\
    &= \exp\left(-\frac{c^2}{2\sigma^2}\right)\int_a^b \exp\left(-\frac{(x-s)^2}{2\sigma^2}\right)\exp\left(\frac{(x-s)c}{\sigma^2}\right)dx \nonumber\\
    &= \exp\left(-\frac{c^2}{2\sigma^2}\right)\exp\left(\frac{(e-s)c}{\sigma^2}\right)A(s),\label{eq:3}
\end{align}
where $e\in[a,b]$. Then we plug Equation~\eqref{eq:3} into the Equation~\eqref{eq:2}, and we have
\begin{align}
   &\frac{\partial}{\partial s} \left(\frac{C(s,\sigma)}{C(s+c,\sigma)}\right)\\
   &=\exp\left(-\frac{c^2}{2\sigma^2}\right)\frac{[\exp(-\frac{(a-s)^2}{2\sigma^2})\exp(\frac{(a-s)c}{\sigma^2})-\exp(-\frac{(b-s)^2}{2\sigma^2})\exp(\frac{(b-s)c}{\sigma^2})]}{A(s)} \nonumber\\
    &\quad- \exp\left(-\frac{c^2}{2\sigma^2}\right)\frac{[\exp(-\frac{(a-s)^2}{2\sigma^2})-\exp(-\frac{(b-s)^2}{2\sigma^2})]\exp(\frac{(e-s)c}{\sigma^2})}{A(s)} \label{eq:4}\\ &=\frac{1}{A(s)}\exp\left(-\frac{c^2}{2\sigma^2}\right)\left[\exp\left(-\frac{(a-s)^2}{2\sigma^2}\right)\left(\exp\left(\frac{(a-s)c}{\sigma^2}\right)-\exp\left(\frac{(e-s)c}{\sigma^2}\right)\right)\right] \nonumber\\
    &\quad + \frac{1}{A(s)}\exp\left(-\frac{c^2}{2\sigma^2}\right)\left[\exp\left(-\frac{(b-s)^2}{2\sigma^2}\right)\left(\exp\left(\frac{(e-s)c}{\sigma^2}\right)-\exp\left(\frac{(b-s)c}{\sigma^2}\right)\right)\right]\label{eq:5}\\
    &= \frac{1}{A(s)}\exp\left(-\frac{c^2}{2\sigma^2}\right)\left[\exp\left(-\frac{(a-s)^2}{2\sigma^2}\right)\cdot B(s,c) + \exp\left(-\frac{(b-s)^2}{2\sigma^2}\right)\cdot C(s,c)\right],\label{eq:6}
\end{align}

where
\begin{align*}
    B(s,c)= \exp\left(\frac{(a-s)c}{\sigma^2}\right)-\exp\left(\frac{(e-s)c}{\sigma^2}\right) \leq 0, \quad \text{and equality is achieved if and only if } e=a,\\
    C(s,c)= \exp\left(\frac{(e-s)c}{\sigma^2}\right)-\exp\left(\frac{(b-s)c}{\sigma^2}\right) \leq 0, \quad \text{if equality is achieved if and only if } e=b.
\end{align*}

The Equation~\eqref{eq:4} comes by rewriting equation~\ref{eq:2} using perfect square trinomial. The Equation~\eqref{eq:5} comes by rearranging these terms and factors out their common factors. The Equation~\eqref{eq:6} is just substituting $B(s,c)$ and $C(s,c)$.

Since $B(s,c)$ and $C(s,c)$ cannot be 0 at the same time, we have 
\begin{equation*}
    \frac{\partial}{\partial s} \left(\frac{C(s,\sigma)}{C(s+c,\sigma)}\right)< 0.
\end{equation*}

Since $s\in[a,b]$, we have
\begin{equation*}
    \max_s \frac{C(s,\sigma)}{C(s+c,\sigma)} = \frac{C(a,\sigma)}{C(a+c,\sigma)}.
\end{equation*}

\section{Proof of Lemma~\ref{lem:preli_result_2}}\label{apdx:proof_preli_result_2}
We proceed this proof by forming a maximization problem and then use the Lagrangian multipliers method to solve it. By symmetry we assume $c>0$ and we have
\begin{align*}
    &\max_c \frac{C(a,\sigma)}{C(a+c,\sigma)}=\max_c\frac{\int_a^b\exp{\left(-\frac{1}{2}\left(\frac{x-a-c}{\sigma}\right)^2\right)}dx}{\int_a^b\exp{\left(-\frac{1}{2}\left(\frac{x-a}{\sigma}\right)^2\right)}dx} \\
    &\text{subject to} \quad 
     c\in [0, \Delta Q].
\end{align*}
Because this is a maximization problem, the Lagrangian function is formed as
\begin{equation*}
    \mathcal{L}(c,\lambda_1,\lambda_2) = \frac{C(a,\sigma)}{C(a+c,\sigma)} - \lambda_1(0-c) - \lambda_2(c-\Delta Q),
\end{equation*}
where $\lambda_1$,$\lambda_2\geq 0$. We then set $\frac{\partial\mathcal{L}(c,\lambda_1,\lambda_2)}{\partial c}=0$ and apply the Karush–Kuhn–Tucker (KKT) conditions to find  

\begin{align}
    &\frac{\partial}{\partial c}\left(\frac{C(a,\sigma)}{C(a+c,\sigma)}\right) +\lambda_1 - \lambda_2 = 0\label{eq:langrage_1} \\
    &\lambda_1(0-c) = 0 \label{eq:langrage_2}\\
    &\lambda_2(c-\Delta Q) = 0 \label{eq:langrage_3}
\end{align}

The derivative term in Equation~\eqref{eq:langrage_1} can be further simplified as
\begin{align}
    \frac{\partial}{\partial c}\left(\frac{C(a,\sigma)}{C(a+c,\sigma)}\right) &= \frac{1}{A(s)}\left(\frac{\partial}{\partial c} \int_a^b\exp{\left(-\frac{1}{2}\left(\frac{x-a-c}{\sigma}\right)^2\right)}dx\right) \nonumber\\
    &= \frac{1}{A(s)}\left(\frac{\partial}{\partial c} \int_{a-c}^{b-c}\exp{\left(-\frac{1}{2}\left(\frac{z-a}{\sigma}\right)^2\right)}dz\right) \nonumber\\
    &=\frac{1}{A(s)}\left(\exp\left(-\frac{(a-a-c)^2}{2\sigma^2}\right)-\exp\left(-\frac{(b-a-c)^2}{2\sigma^2}\right)\right) \label{eq:7}\\
    &= \frac{1}{A(s)}\left(\exp\left(-\frac{c^2}{2\sigma^2}\right)-\exp\left(-\frac{(c-(b-a))^2}{2\sigma^2}\right)\right),\label{eq:8}
\end{align}

where $z=x-c$. Equation~\eqref{eq:7} comes by using Theorem~\ref{thm:Leibniz_Rule}. We now consider two cases.

\emph{Case 1: \texorpdfstring{$\Delta Q< \frac{b-a}{2}$.}{Lg}}

Now if $c\leq\Delta Q<\frac{b-a}{2}$, then
\begin{align*}
    &0 \leq c < \frac{b-a}{2} \\
    &0\leq 2c < b-a \\
    &0\leq c < (b-a) - c \\
    &0\leq c^2 < (c-(b-a))^2\\
    &\exp\left(-\frac{c^2}{2\sigma^2}\right) > \exp\left(-\frac{(c-(b-a))^2}{2\sigma^2}\right).
\end{align*}

Therefore based on Equation~\eqref{eq:8} we have $\frac{\partial}{\partial c}\left(\frac{C(a,\sigma)}{C(a+c,\sigma)}\right) > 0$. Therefore Equations~\eqref{eq:langrage_1}-~\eqref{eq:langrage_3} are satisfied simultaneously if and only if $\lambda_2> 0$ and $\lambda_1=0$. Therefore $c=\Delta Q$. 

\emph{Case 2:  \texorpdfstring{$\frac{b-a}{2}\leq\Delta Q\leq b-a$}{Lg}}

Now if $\frac{b-a}{2}\leq\Delta Q\leq b-a$, then Equation~\eqref{eq:langrage_1}-~\eqref{eq:langrage_3} are satisfied simultaneously if and only if $c=\frac{b-a}{2}$, $\lambda_1=0$ and $\lambda_2=0$. 

Thus, we have 
\begin{align}
    c = \begin{cases}
    \Delta Q , &\text{ if } \Delta Q\leq \frac{b-a}{2},\\
    \frac{b-a}{2}, &\text{ Otherwise.}
    \end{cases}\label{eq:apdx_c}
\end{align}

\section{Proof of Theorem~\ref{thm:main_result_1}}\label{apdx:proof_main_result_1}
Let $d\in S^n$ and $d'\in S^n$ be adjacent, and let a query $Q:S^n\rightarrow D$ be given, where $D=[a,b]\subset\mathbb{R}$ with $|a|,|b|<\infty$. Let $Q(d)=s\in D$ and $Q(d')=s'\in D$ such that $s'=s+c$ and by Definition~\ref{def:sensitivity} we have $|c|\leq \Delta Q$. 
Without loss of generality we take $c \geq 0$. To prove differential privacy we examine the ratio of the probabilities that the bounded Gaussian mechanism outputs some element $z\in D$ on $s$ and $s'$, namely
\begin{align*}
    \frac{\Pr[M_B(s)=z]}{\Pr[M_B(s')=z]}&= \frac{\phi\left(\frac{z-s}{\sigma}\right)}{\phi\left(\frac{z-s'}{\sigma}\right)}\cdot\frac{\Phi\left(\frac{b-s'}{\sigma}\right)-\Phi\left(\frac{a-s'}{\sigma}\right)}{\Phi\left(\frac{b-s}{\sigma}\right)-\Phi\left(\frac{a-s}{\sigma}\right)} \nonumber\\
    &= \frac{\phi\left(\frac{z-s}{\sigma}\right)}{\phi\left(\frac{z-s-c}{\sigma}\right)}\cdot\frac{\Phi\left(\frac{b-s-c}{\sigma}\right)-\Phi\left(\frac{a-s-c}{\sigma}\right)}{\Phi\left(\frac{b-s}{\sigma}\right)-\Phi\left(\frac{a-s}{\sigma}\right)} \nonumber\\
    &= \frac{\phi\left(\frac{z-s}{\sigma}\right)}{\phi\left(\frac{z-s-c}{\sigma}\right)}\cdot \frac{C(s,\sigma)}{C(s+c,\sigma)}\\
    &\leq \Delta C(\sigma)\cdot |g(z,s,c|\sigma)|,
\end{align*}

where
\begin{align*}
    |g(z,s,c|\sigma)| &=\left|\frac{\phi\left(\frac{z-s}{\sigma}\right)}{\phi\left(\frac{z-s-c}{\sigma}\right)}\right|\\ &=\left|\frac{\exp\left(-\frac{x^2}{2\sigma^2}\right)}{\exp\left(-\frac{(x+c)^2}{2\sigma^2}\right)}\right|\\
    &= \left|\exp\left(\frac{(2xc+c^2)}{2\sigma^2}\right)\right|\\
    &\leq \left|\exp\left(\frac{(2x\Delta Q+(\Delta Q)^2)}{2\sigma^2}\right)\right|,
\end{align*}

with $x=s-z$. As in~\cite{Dwork2014}, for $\epsilon'>0$, if $x<\frac{\sigma^2\epsilon'}{\Delta Q}-\frac{\Delta Q}{2}$ then we have
\begin{equation*}
    |g(z,s,c|\sigma)| \leq \exp(\epsilon').
\end{equation*}

Since $x\in[0,b-a]$, we let
\begin{align*}
    \frac{\sigma^2\epsilon'}{\Delta Q}-\frac{\Delta Q}{2}\geq b-a.
\end{align*}

solving for $\sigma^2$ gives
\begin{equation*}
    \sigma^2 \geq \frac{\left[(b-a)+\frac{\Delta Q}{2}\right]\Delta Q}{\epsilon'}.
\end{equation*}

Now we let
\begin{align*}
    \frac{\Pr[M_B(s)=z]}{\Pr[M_B(s')=z]}&\leq \Delta C(\sigma)\cdot |g(z,s,c|\sigma)|\\
    &\leq  \Delta C(\sigma)\exp(\epsilon').
\end{align*}

To satisfy the differential privacy guarantee, we let $\exp(\epsilon)=\Delta C(\sigma)\exp(\epsilon')$. Then $\epsilon'=\epsilon-\ln{\Delta C(\sigma)}$. As a result, the scale parameter $\sigma$ has to satify $\sigma^2\geq \frac{\left[(b-a)+\frac{\Delta Q}{2}\right]\Delta Q}{\epsilon-\ln(\Delta C(\sigma))}$.

\section{Proof of Lemma~\ref{lem:denominator_greater_0}}\label{apdx:proof_denominator_greater_0}
By definition of $\sigma_0$ we have
    \begin{equation*}
        \Delta C(\sigma_0) = \begin{cases}
        \frac{C(a,\sigma_0)}{C(a+\Delta Q,\sigma_0)} & \text{if } \Delta Q\leq\frac{b-a}{2}\\[4pt]
        \frac{C(a,\sigma_0)}{C\left(\frac{b+a}{2},\sigma_0\right)} & \text{otherwise}.
        \end{cases}
    \end{equation*}

    We first consider $\Delta Q\leq \frac{b-a}{2}$. First note that $\Delta C(\sigma_0)$ can be further simplified as 
    \begin{align}
        \Delta C(\sigma_0) &= \frac{\int_a^b\exp{\left(-\frac{1}{2}\left(\frac{x-a-\Delta Q}{\sigma_0}\right)^2\right)}dx}{\int_a^b\exp{\left(-\frac{1}{2}\left(\frac{x-a}{\sigma_0}\right)^2\right)}dx}\nonumber\\
        &= \frac{\int_a^b\exp{\left(-\frac{1}{2}\left(\frac{x-a}{\sigma_0}\right)^2\right)}\exp\left(\frac{1}{2}\left(\frac{2(x-a)\Delta Q}{\sigma_0^2}\right)\right)\exp\left(-\frac{1}{2}\frac{(\Delta Q)^2}{\sigma_0^2}\right)dx}{\int_a^b\exp{\left(-\frac{1}{2}\left(\frac{x-a}{\sigma_0}\right)^2\right)}dx}\nonumber\\
        &= \frac{\exp\left(\frac{1}{2}\left(\frac{2(\delta-a)\Delta Q}{\sigma_0^2}\right)\right)\exp\left(-\frac{1}{2}\frac{(\Delta Q)^2}{\sigma_0^2}\right)\int_a^b\exp{\left(-\frac{1}{2}\left(\frac{x-a}{\sigma_0}\right)^2\right)}dx}{\int_a^b\exp{\left(-\frac{1}{2}\left(\frac{x-a}{\sigma_0}\right)^2\right)}dx}, \quad\text{ where } \delta\in[a,b]\nonumber\\
        &= \exp\left(\frac{1}{2}\left(\frac{2(\delta-a)\Delta Q}{\sigma_0^2}\right)\right)\exp\left(-\frac{1}{2}\frac{(\Delta Q)^2}{\sigma_0^2}\right).
        \label{eq:proof_lemma_1_2}
    \end{align}

The second equality holds by binomial theorem, and the third equality holds by using Theorem~\ref{thm:mean_value_thm}. We can now plug in $\sigma_0=\sqrt{\frac{\left[(b-a)+\frac{\Delta Q}{2}\right]\Delta Q}{\epsilon}}$ and get:
\begin{align*}
    \Delta C(\sigma_0)&=\exp\left(\frac{1}{2}\left(\frac{2(\delta-a)\Delta Q\epsilon}{\left(b-a+\frac{\Delta Q}{2}\right)\Delta Q}\right)\right)\exp\left(-\frac{1}{2}\left(\frac{(\Delta Q)^2\epsilon}{\left(b-a+\frac{\Delta Q}{2}\right)\Delta Q}\right)\right)\\
    &= \exp\left(\frac{1}{2}\left(
    \frac{(2(\delta-a)-\Delta Q)\Delta Q\epsilon}{(b-a+\frac{\Delta Q}{2})\Delta Q}\right)\right)\\
    &= \exp\left(\frac{(2(\delta-a)-\Delta Q)\Delta Q\epsilon}{(2(b-a)+\Delta Q)\Delta Q}\right)\\
    &< \exp(\epsilon).
\end{align*}

The last inequality holds since $\delta\in[a,b]$ and $2(\delta-a)-\Delta Q\leq 2(b-a)+\Delta Q$.

Now consider $\Delta Q> \frac{b-a}{2}$, then we plug $\sigma_0$ in and have
\begin{align*}
    \Delta C(\sigma_0) &= \frac{\int_a^b\exp{\left(-\frac{1}{2}\left(\frac{x-a-\frac{b-a}{2}}{\sigma_0}\right)^2\right)}dx}{\int_a^b\exp{\left(-\frac{1}{2}\left(\frac{x-a}{\sigma_0}\right)^2\right)}dx}\\
    &= \exp\left(\frac{1}{2}\left(\frac{2(\delta-a)\frac{b-a}{2}}{\sigma_0^2}\right)\right)\exp\left(-\frac{1}{2}\frac{\left(\frac{b-a}{2}\right)^2}{\sigma_0^2}\right)\\
    &=\exp\left(\frac{\left(2(\delta-a)-\frac{b-a}{2}\right)\frac{b-a}{2}}{(2(b-a)+\Delta Q)\Delta Q}\epsilon\right)\\
    &\leq \exp\left(\frac{\left(2(\delta-a)-\frac{b-a}{2}\right)\Delta Q}{(2(b-a)+\Delta Q)\Delta Q}\epsilon\right)\\
    &\leq \exp\left(\frac{\left(2(\delta-a)-\frac{b-a}{2}\right)}{(2(b-a)+\Delta Q)}\epsilon\right)\\
    &\leq \exp(\epsilon).
\end{align*}
The second equality is achieved by the same technique as Equation~\eqref{eq:proof_lemma_1_2}. The first inequality holds since $\frac{b-a}{2}\leq \Delta Q$, and the third inequality holds since $2(\delta-a)-\frac{b-a}{2}<2(b-a)+\Delta Q$.

\section{Proof of Lemma~\ref{lem:delta_c_greater_0}}\label{apdx:proof_delta_c_greater_0}
We prove the above inequality by examining the term $\Delta C(\sigma)$. According to Definition~\ref{def:sensitivity} and Equation~\eqref{eq:apdx_c} we have $c\in\left(0,\frac{b-a}{2}\right]$. Then

\begin{align*}
    \Delta C(\sigma)&= \frac{\int_a^b\exp{\left(-\frac{1}{2}\left(\frac{x-a-c}{\sigma}\right)^2\right)}dx}{\int_a^b\exp{\left(-\frac{1}{2}\left(\frac{x-a}{\sigma}\right)^2\right)}dx}\\
    &= \frac{\int_{a-c}^{b-c}\exp{\left(-\frac{1}{2}\left(\frac{x-a}{\sigma}\right)^2\right)}dx}{\int_a^b\exp{\left(-\frac{1}{2}\left(\frac{x-a}{\sigma}\right)^2\right)}dx}\\
    &= \frac{\int_{a-c}^{a}\exp{\left(-\frac{1}{2}\left(\frac{x-a}{\sigma}\right)^2\right)}dx+\int_{a}^{b-c}\exp{\left(-\frac{1}{2}\left(\frac{x-a}{\sigma}\right)^2\right)}dx}{\int_{b-c}^{b}\exp{\left(-\frac{1}{2}\left(\frac{x-a}{\sigma}\right)^2\right)}dx+\int_{a}^{b-c}\exp{\left(-\frac{1}{2}\left(\frac{x-a}{\sigma}\right)^2\right)}dx}\\
    &= \frac{\int_{-c}^{0}\exp{\left(-\frac{1}{2}\left(\frac{y}{\sigma}\right)^2\right)}dy+\int_{a}^{b-c}\exp{\left(-\frac{1}{2}\left(\frac{x-a}{\sigma}\right)^2\right)}dx}{\int_{-c}^{0}\exp{\left(-\frac{1}{2}\left(\frac{z+b-a}{\sigma}\right)^2\right)}dz+\int_{a}^{b-c}\exp{\left(-\frac{1}{2}\left(\frac{x-a}{\sigma}\right)^2\right)}dx},
\end{align*}

where $y=x-a$ and $z=x-b$. To proceed in the proof we introduce the comparison property of definite integrals.
\begin{thm}[Comparison Property of Definite Integrals]
    If $f(m)\geq g(m)$ for $\underline{m}\leq m\leq \bar{m}$, then
    \begin{equation*}
        \int_{\underline{m}}^{\bar{m}}f(m)dm\geq \int_{\underline{m}}^{\bar{m}}g(m)dm.
    \end{equation*}
\end{thm}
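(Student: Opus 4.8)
The plan is to reduce the claim to the single elementary fact that the definite integral of a nonnegative function is nonnegative, and then invoke linearity of the integral. First I would set $h(m) = f(m) - g(m)$ on $[\underline{m}, \bar{m}]$; the hypothesis $f(m) \geq g(m)$ guarantees $h(m) \geq 0$ throughout the interval. By linearity of the integral, $\int_{\underline{m}}^{\bar{m}} f(m)\,dm - \int_{\underline{m}}^{\bar{m}} g(m)\,dm = \int_{\underline{m}}^{\bar{m}} h(m)\,dm$, so it suffices to show that the right-hand side is nonnegative.

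To establish $\int_{\underline{m}}^{\bar{m}} h(m)\,dm \geq 0$ for $h \geq 0$, I would argue at the level of Riemann sums. For any tagged partition $\underline{m} = m_0 < m_1 < \cdots < m_N = \bar{m}$ with sample points $\xi_i \in [m_{i-1}, m_i]$, the Riemann sum $\sum_{i=1}^{N} h(\xi_i)(m_i - m_{i-1})$ is a sum of products of the nonnegative factors $h(\xi_i) \geq 0$ and $m_i - m_{i-1} > 0$, hence is itself nonnegative. Since the integral is the limit of such Riemann sums as the mesh of the partition tends to zero, and the limit of nonnegative quantities is nonnegative, I conclude $\int_{\underline{m}}^{\bar{m}} h(m)\,dm \geq 0$. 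Combining this with the identity from the previous paragraph yields $\int_{\underline{m}}^{\bar{m}} f(m)\,dm \geq \int_{\underline{m}}^{\bar{m}} g(m)\,dm$, as claimed.

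Because the result is so elementary, there is no genuine obstacle; the only point I would flag is the existence of the integrals themselves, since the Riemann-sum argument presumes that $f$, $g$, and therefore $h$ are integrable on $[\underline{m}, \bar{m}]$. In every application of this property in the paper the integrands are continuous---sums and ratios of Gaussian densities---so integrability is automatic and the limiting Riemann sum exists independently of the chosen partitions and tags. If one wishes to dispense with any integrability hypothesis, the identical conclusion follows in the Darboux framework: $h \geq 0$ forces every lower Darboux sum to be nonnegative, so the lower integral, and hence the integral, is nonnegative.
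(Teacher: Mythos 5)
Your proof is correct, and it matches the situation in the paper: the authors state this comparison property without proof, treating it as a standard calculus fact, so your reduction to $h = f - g \geq 0$ followed by nonnegativity of Riemann (or lower Darboux) sums is precisely the canonical argument one would supply. Your caveat about integrability is also well placed and harmless here, since every integrand to which the paper applies this property is a continuous combination of Gaussian densities on a compact interval, so integrability is automatic.
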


We now have 
\begin{equation*}
    \exp{\left(-\frac{1}{2}\left(\frac{y}{\sigma}\right)^2\right)}\geq \exp{\left(-\frac{1}{2}\left(\frac{z+b-a}{\sigma}\right)^2\right)}
\end{equation*}
on $y\in[-c,0]$, $z\in[-c,0]$ and $c\in\left(0,\frac{b-a}{2}\right]$. Therefore 
\begin{equation*}
    \int_{-c}^{0}\exp{\left(-\frac{1}{2}\left(\frac{y}{\sigma}\right)^2\right)}dy\geq \int_{-c}^{0}\exp{\left(-\frac{1}{2}\left(\frac{z+b-a}{\sigma}\right)^2\right)}dz.
\end{equation*}

Therefore the numerator of $\Delta C(\sigma)$ is greater than its denominator. As a result, $\Delta C(\sigma)>1$ and $\ln(\Delta C(\sigma))>0$.

\section{Proof of Lemma~\ref{lem:lemma_f_decreasing}}\label{apdx:proof_lemma_f_decreasing}
We take the derivative of $f(\sigma)$ and have
\begin{align*}
    f'(\sigma)&=2\sigma - \left(\frac{\left[(b-a)+\frac{\Delta Q}{2}\right]\Delta Q}{\epsilon-\ln(\Delta C(\sigma))}\right)'\\
    &= 2\sigma - \frac{\left[(b-a)+\frac{\Delta Q}{2}\right]\Delta Q}{(\epsilon-\ln(\Delta C(\sigma))^2}\cdot \frac{1}{\Delta C(\sigma)} \cdot (\Delta C(\sigma))'\\
    &= 2\sigma + h(\sigma),
\end{align*}
where
\begin{equation*}
    h(\sigma) = - \frac{\left[(b-a)+\frac{\Delta Q}{2}\right]\Delta Q}{(\epsilon-\ln(\Delta C(\sigma))^2}\cdot \frac{1}{\Delta C(\sigma)} \cdot (\Delta C(\sigma))'.
\end{equation*}

According to Lemma~\ref{lem:delta_c_greater_0} we have $\Delta C(\sigma)>1$ for all $\sigma\in(0,\infty]$. We next prove $(\Delta C(\sigma))'<0$ since $(\Delta C(\sigma))'<0$ it will follow that $h(\sigma)>0$.

We now examine $(\Delta C(\sigma))'$. By taking the derivative and using Theorem~\ref{thm:Leibniz_Rule} we have
\begin{align*}
    (\Delta C(\sigma))'&= \frac{d}{d\sigma}\left(\frac{\int_a^b\exp\left(-\frac{1}{2}\left(\frac{x-a-c}{\sigma}\right)^2\right)dx}{\int_a^b\exp\left(-\frac{1}{2}\left(\frac{y-a}{\sigma}\right)^2\right)dy}\right)\\
    &= \frac{d}{d\sigma}\left(\frac{\int_{\frac{-c}{\sigma}}^{\frac{b-a-c}{\sigma}}\exp\left(-\frac{1}{2}x^2\right)dx}{\int_{0}^{\frac{b-a}{\sigma}}\exp\left(-\frac{1}{2}y^2\right)dy}\right)\\
    &= \frac{\left[\exp\left(-\frac{1}{2}\left(\frac{b-a-c}{\sigma}\right)^2\right)\cdot \frac{-(b-a-c)}{\sigma^2}-\exp\left(-\frac{1}{2}\left(\frac{-c}{\sigma}\right)^2\right)\cdot \frac{c}{\sigma^2}\right]\cdot \left(\int_a^b\exp\left(-\frac{1}{2}\left(\frac{y-a}{\sigma}\right)^2\right)dy\right)}{\left(\int_a^b\exp\left(-\frac{1}{2}\left(\frac{y-a}{\sigma}\right)^2\right)dy\right)^2}\\
    &\quad - \frac{\left[\exp\left(-\frac{1}{2}\left(\frac{b-a}{\sigma}\right)^2\right)\cdot\frac{-(b-a)}{\sigma^2}\right]\cdot\int_a^b\exp\left(-\frac{1}{2}\left(\frac{x-a-c}{\sigma}\right)^2\right)dx }{\left(\int_a^b\exp\left(-\frac{1}{2}\left(\frac{y-a}{\sigma}\right)^2\right)dy\right)^2}\\
    &= \beta_1(\sigma) + \beta_2(\sigma), 
\end{align*}
where
\begin{align*}
    \beta_1(\sigma) &= \frac{\left[-\exp\left(-\frac{1}{2}\left(-\frac{b-a-c}{\sigma}\right)^2\right)\cdot \frac{b-a-c}{\sigma^2}-\exp\left(-\frac{1}{2}\left(\frac{c}{\sigma}\right)^2\right)\cdot \frac{c}{\sigma^2}\right]}{\int_a^b\exp\left(-\frac{1}{2}\left(\frac{y-a}{\sigma}\right)^2\right)dy}\\
    &= \frac{\frac{1}{\sigma^2}\exp\left(-\frac{c^2}{2\sigma^2}\right)\cdot\left[-\exp\left(-\frac{(b-a)^2-2(b-a)c}{2\sigma^2}\right)\cdot(b-a-c)-c\right]}{\int_a^b\exp\left(-\frac{1}{2}\left(\frac{y-a}{\sigma}\right)^2\right)dy}. 
\end{align*}

The second equality holds by splitting the square terms and reorganizing them. Moreover, we have
\begin{align*}
    \beta_2(\sigma) &= -\frac{\left[\exp\left(-\frac{1}{2}\left(\frac{b-a}{\sigma}\right)^2\right)\cdot\frac{-(b-a)}{\sigma^2}\right]\cdot \int_a^b\exp\left(-\frac{1}{2}\left(\frac{(x-a)^2-2(x-a)c+c^2}{\sigma^2}\right)\right)dx}{\left(\int_a^b\exp\left(-\frac{1}{2}\left(\frac{y-a}{\sigma}\right)^2\right)dy\right)^2}\\
    &= -\frac{\frac{1}{\sigma^2}\exp\left(-\frac{c^2}{2\sigma^2}\right)\left[-\exp\left(-\frac{1}{2}\left(\frac{b-a}{\sigma}\right)^2\right)(b-a)\cdot\exp\left(\frac{(\delta-a)c}{\sigma^2}\right)\cdot\int_a^b\exp\left(-\frac{(x-a)^2}{2\sigma^2}\right)dx\right]}{\left(\int_a^b\exp\left(-\frac{1}{2}\left(\frac{y-a}{\sigma}\right)^2\right)dy\right)^2}\\
    &= \frac{\frac{1}{\sigma^2}\exp\left(-\frac{c^2}{2\sigma^2}\right)\left[\exp\left(-\frac{1}{2}\left(\frac{b-a}{\sigma}\right)^2\right)(b-a)\cdot\exp\left(\frac{(\delta-a)c}{\sigma^2}\right)\right]}{\int_a^b\exp\left(-\frac{1}{2}\left(\frac{y-a}{\sigma}\right)^2\right)dy}.
\end{align*}

The second equality holds by applying Theorem~\ref{thm:mean_value_thm}, and the third equality holds by cancelling the integral term. Note $\delta\in(a,b)$. Then we have
\begin{align*}
    \beta_1(\sigma) + \beta_2(\sigma) &= \frac{\frac{1}{\sigma^2}\exp\left(-\frac{c^2}{2\sigma^2}\right)\cdot}{\int_a^b\exp\left(-\frac{1}{2}\left(\frac{y-a}{\sigma}\right)^2\right)dy}\\
    &\quad\left[c\left(\exp\left(-\frac{(b-a)^2-2(b-a)c}{2\sigma^2}\right)-1\right)+\right.\\
    &\quad\left.(b-a)\cdot\left[\exp\left(-\frac{(b-a)^2-2(\delta-a)c}{2\sigma^2}\right)-\exp\left(-\frac{(b-a)^2-2(b-a)c}{2\sigma^2}\right)\right]\right]\\
    &= \frac{\frac{1}{\sigma^2}\exp\left(-\frac{c^2}{2\sigma^2}\right)}{\int_a^b\exp\left(-\frac{1}{2}\left(\frac{y-a}{\sigma}\right)^2\right)dy}\cdot [\gamma_1(\sigma)+ \gamma_2(\sigma)],
\end{align*}
where
\begin{align*}
    \gamma_1(\sigma)&=c\left(\exp\left(-\frac{(b-a)^2-2(b-a)c}{2\sigma^2}\right)-1\right)\\
    &= c\left(\exp\left(-\frac{(b-a)(b-a-2c)}{2\sigma^2}\right)-1\right). 
\end{align*}
From Equation~\eqref{eq:apdx_c} we know
\begin{align*}
    c = \begin{cases}
    \Delta Q  &\text{ if } \Delta Q\leq \frac{b-a}{2}\\
    \frac{b-a}{2} &\text{ otherwise }. 
    \end{cases}
\end{align*}
Therefore $c\in\left(0,\frac{b-a}{2}\right]$. Then $\gamma_1(\sigma)\leq 0$. Now we examine $\gamma_2(\sigma)$:

\begin{align*}
    \gamma_2(\sigma)&=(b-a)\cdot\left[\exp\left(-\frac{(b-a)^2-2(\delta-a)c}{2\sigma^2}\right)-\exp\left(-\frac{(b-a)^2-2(b-a)c}{2\sigma^2}\right)\right]\\
    &<0.
\end{align*}
This inequality holds because $\delta\in(a,b)$. Therefore $v'(\sigma)=\beta_1(\sigma)+\beta_2(\sigma)<0$. And then we have $h(\sigma)>0$ and $f'(\sigma)>0$ on $[\sigma_0,\infty)$.

\section{Proof of Lemma~\ref{lem:preli_result_3}}\label{apdx:proof_preli_result_3}
By symmetry we assume $\mathbf{s}'=\mathbf{s}+\mathbf{c}$. We let $\mathbf{s}=(s_1,s_2,\dots,s_m)^T$. We proceed in this proof by showing that $\frac{C_m(\mathbf{s},\sigma_m)}{C_m(\mathbf{s}+\mathbf{c},\sigma_m)}$ is monotonically decreasing with respect to each $s_i$, $i\in\{1,2,\dots,m\}$. In other words, we show that $\frac{\partial}{\partial s_i}\left(\frac{C_m(\mathbf{s},\sigma_m)}{C_m(\mathbf{s+c},\sigma_m)}\right)<0$. We only show the proof of $\frac{\partial}{\partial s_1}\left(\frac{C_m(\mathbf{s},\sigma_m)}{C_m(\mathbf{s+c},\sigma_m)}\right)\leq 0$, but the proof for $s_2,\dots,s_m$ proceed in the same way. We first note that
\begin{align}
    \frac{C_m(\mathbf{s},\sigma_m)}{C_m(\mathbf{s+c},\sigma_m)}&= \frac{\int_\mathbf{a}^\mathbf{b}\exp{\left(-\frac{1}{2}(\mathbf{x}-\mathbf{s}-\mathbf{c})^T\Sigma^{-1}(\mathbf{x}-\mathbf{s}-\mathbf{c})\right)}dx}{\int_\mathbf{a}^\mathbf{b}\exp{\left(-\frac{1}{2}(\mathbf{x}-\mathbf{s})^T\Sigma^{-1}(\mathbf{x}-\mathbf{s})\right)}dx}\nonumber\\
    &=\frac{\int_{a_1}^{b_1}\int_{a_2}^{b_2}\cdots\int_{a_m}^{b_m}\exp\left(-\frac{1}{2\sigma_m^2}\left(\sum_{i=1}^n(x_i-s_i-c_i)^2\right)\right)dx_mdx_{m-1}\cdots dx_1}{\int_{a_1}^{b_1}\int_{a_2}^{b_2}\cdots\int_{a_m}^{b_m}\exp\left(-\frac{1}{2\sigma_m^2}\left(\sum_{i=1}^n(x_i-s_i)^2\right)\right)dx_mdx_{m-1}\cdots dx_1}\nonumber\\
    &= h_1(s_1,c_1)h_2(s_2,\dots,s_m,c_2,\dots,c_m),\label{eq:G_1}
\end{align}
where
\begin{align*}
    &h_1(s_1,c_1)=\frac{\int_{a_1}^{b_1}\exp\left(-\frac{1}{2\sigma_m^2}(x_1-s_1-c_1)^2\right)dx_1}{\int_{a_1}^{b_1}\exp\left(-\frac{1}{2\sigma_m^2}(x_1-s_1)^2\right)dx_1},\\
    &h_2(s_2,\dots,s_m,c_2,\dots,c_m)=\frac{\int_{a_2}^{b_2}\cdots\int_{a_m}^{b_m}\exp\left(-\frac{1}{2\sigma_m^2}\left(\sum_{i=2}^n(x_i-s_i-c_i)^2\right)\right)dx_mdx_{m-1}\cdots dx_2}{\int_{a_2}^{b_2}\cdots\int_{a_m}^{b_m}\exp\left(-\frac{1}{2\sigma_m^2}\left(\sum_{i=2}^n(x_i-s_i)^2\right)\right)dx_mdx_{m-1}\cdots dx_2}.
\end{align*}

Then take the derivative and we have
\begin{align*}
    \frac{\partial}{\partial s_1}\left(\frac{C_m(\mathbf{s},\sigma_m)}{C_m(\mathbf{s+c},\sigma_m)}\right)=\frac{\partial h_1(s_1,c_1)}{\partial s_1}\cdot h_2(s_2,\dots,s_m,c_2,\dots,c_m).
\end{align*}

Since $\frac{\partial h_2(\cdot)}{\partial s_1}=0$. We also have $h_2(\cdot)>0$, to show $\frac{\partial}{\partial s_1}\left(\frac{C_m(s,\sigma_m)}{C_m(s+c,\sigma_m)}\right)\leq 0$ we only need to show $\frac{\partial h_1(s_1,c_1)}{\partial s_1}\leq 0$, which can be proved by using the same technique as Appendix~\ref{apdx:proof_preli_result_1}.

\section{Proof of Theorem~\ref{thm:main_result_2}}\label{apdx:proof_main_result_2}
Fix two adjacent databases $d\in S^n$ and $d'\in S^n$, along with a query $Q$. 
Suppose that~$Q(d)=\mathbf{s}$ and $Q(d')=\mathbf{s}'$ such that $\mathbf{s'=s+c}$ and $||\mathbf{c}||_2\leq \Delta Q$. To prove 
that the multivariate mechanism provides
differential privacy, we examine
\begin{align*}
    \frac{\Pr[M_B^m(\mathbf{s})=\mathbf{z}]}{\Pr[M_B^m(\mathbf{s'})=\mathbf{z}]}&=\frac{\int_\mathbf{a}^\mathbf{b}\exp{\left(-\frac{1}{2}(\mathbf{x}-\mathbf{s}-\mathbf{c})^T\Sigma^{-1}(\mathbf{x}-\mathbf{s}-\mathbf{c})\right)}dx}{\int_\mathbf{a}^\mathbf{b}\exp{\left(-\frac{1}{2}(\mathbf{x}-\mathbf{s})^T\Sigma^{-1}(\mathbf{x}-\mathbf{s})\right)}dx}\\
    &\quad\cdot\frac{\exp{\left(-\frac{1}{2}(\mathbf{z}-\mathbf{s})^T\Sigma^{-1}(\mathbf{z}-\mathbf{s})\right)}}{\exp{\left(-\frac{1}{2}(\mathbf{z}-\mathbf{s}-\mathbf{c})^T\Sigma^{-1}(\mathbf{z}-\mathbf{s}-\mathbf{c})\right)}}\\
    &=  \frac{C_m(\mathbf{s},\sigma_m)}{C_m(\mathbf{s+c},\sigma_m)}\cdot\frac{\exp{\left(-\frac{1}{2}(\mathbf{z}-\mathbf{s})^T\Sigma^{-1}(\mathbf{z}-\mathbf{s})\right)}}{\exp{\left(-\frac{1}{2}(\mathbf{z}-\mathbf{s}-\mathbf{c})^T\Sigma^{-1}(\mathbf{z}-\mathbf{s}-\mathbf{c})\right)}}\\
    &\leq \Delta C_m(\sigma_m,\mathbf{c}^*)\cdot\left|\frac{\exp{\left(-\frac{1}{2}(\mathbf{z}-\mathbf{s})^T\Sigma^{-1}(\mathbf{z}-\mathbf{s})\right)}}{\exp{\left(-\frac{1}{2}(\mathbf{z}-\mathbf{s}-\mathbf{c})^T\Sigma^{-1}(\mathbf{z}-\mathbf{s}-\mathbf{c})\right)}}\right|. 
\end{align*}

Then, using the fact that~$\Sigma^{-1} = \frac{1}{\sigma_m^2}I$, we have 
\begin{align*}
    &\quad\left|\frac{\exp{\left(-\frac{1}{2}(\mathbf{z}-\mathbf{s})^T\Sigma^{-1}(\mathbf{z}-\mathbf{s})\right)}}{\exp{\left(-\frac{1}{2}(\mathbf{z}-\mathbf{s}-\mathbf{c})^T\Sigma^{-1}(\mathbf{z}-\mathbf{s}-\mathbf{c})\right)}}\right|\\ 
    &=\left|\frac{\exp\left(-\frac{||\mathbf{z-s}||_2^2}{2\sigma_m^2}\right)}{\exp\left(-\frac{||\mathbf{z-s-c}||_2^2}{2\sigma_m^2}\right)}\right|\\
    &= \left|\frac{\exp\left(-\frac{||\mathbf{x}||_2^2}{2\sigma_m^2}\right)}{\exp\left(-\frac{||\mathbf{x+c}||_2^2}{2\sigma_m^2}\right)}\right|\\
    &= \left|\exp\left(\frac{2\mathbf{x}^T\mathbf{c}+||\mathbf{c}||_2^2}{2\sigma_m^2}\right)\right|\\
    &\leq\left|\exp\left(\frac{2||\mathbf{x}||_2\Delta Q+(\Delta Q)^2}{2\sigma_m^2}\right)\right|,
\end{align*}
where we have set~$\mathbf{x=s-z}$. Based on Appendix A in~\cite{Dwork2014}, 
for $\epsilon'>0$, if $||\mathbf{x}||_2<\frac{\sigma_m^2\epsilon'}{\Delta Q}-\frac{\Delta Q}{2}$, 
then we have
\begin{equation}
    \left|\exp\left(\frac{2||\mathbf{x}||_2\Delta Q+(\Delta Q)^2}{2\sigma_m^2}\right)\right| \leq \exp(\epsilon').\label{eq:9}
\end{equation}

Since $||\mathbf{x}||_2\in\mathbf{\Big[0,||b-a||_2\Big]}$, then we choose the following inequality to ensure that Equation~\eqref{eq:9} always holds:
\begin{align}
    \frac{\sigma_m^2\epsilon'}{\Delta Q}-\frac{\Delta Q}{2}\geq ||\mathbf{b-a}||_2.\label{eq:10}
\end{align}
By rearranging Equation~\eqref{eq:10} we have a range for the scale parameter $\sigma_m$ 
\begin{equation*}
    \sigma_m^2 \geq \frac{\left[||\mathbf{b-a}||_2+\frac{\Delta Q}{2}\right]\Delta Q}{\epsilon'}.
\end{equation*}
Now we let
\begin{align*}
    \frac{\Pr[M_B^m(s)=z]}{\Pr[M_B^m(s')=z]}&\leq \Delta C_m(\sigma_m,\mathbf{c}^*)\cdot \left|\exp\left(\frac{2||\mathbf{x}||_2\Delta Q+(\Delta Q)^2}{2\sigma_m^2}\right)\right|\\
    &\leq  \Delta C_m(\sigma_m,\mathbf{c}^*)\exp(\epsilon').
\end{align*}
To satisfy the differential privacy guarantee, we let $\exp(\epsilon)=\Delta C_m(\sigma_m,\mathbf{c}^*)\exp(\epsilon')$. Then $\epsilon'=\epsilon-\ln{\Delta C_m(\sigma_m,\mathbf{c}^*)}$. As a result, the scale parameter $\sigma_m$ has to satify $\sigma_m^2\geq \frac{\left[||\mathbf{b-a}||_2+\frac{\Delta Q}{2}\right]\Delta Q}{\epsilon-\ln(\Delta C_m(\sigma_m,\mathbf{c}^*))}$.

\section{Proof of Lemma~\ref{lem:denominator_greater_0_nd}}\label{apdx:proof_denominator_greater_0_nd}
We proceed in this proof by showing that $\ln{(\Delta C_m(\sigma_{m,0},\mathbf{c}^*))}<\epsilon$. For $\sigma_{m,0}$ we have
\begin{align*}
    \Delta C_m(\sigma_{m,0},\mathbf{c}^*) &= \frac{C_m(\mathbf{a},\sigma_{m,0})}{C_m\left(\mathbf{a}+\mathbf{c}^*,\sigma_{m,0}\right)}\\
    &= \frac{\int_\mathbf{a}^\mathbf{b}\exp{\left(-\frac{1}{2}(\mathbf{x}-\mathbf{a}-\mathbf{c}^*)^T\Sigma_0^{-1}(\mathbf{x}-\mathbf{a}-\mathbf{c}^*)\right)}dx}{\int_\mathbf{a}^\mathbf{b}\exp{\left(-\frac{1}{2}(\mathbf{x}-\mathbf{a})^T\Sigma_0^{-1}(\mathbf{x}-\mathbf{a})\right)}dx}\\
    &= \frac{\exp\left(-\frac{1}{2}(\mathbf{c}^*)^T\Sigma_0^{-1}\mathbf{c}^*\right)\int_\mathbf{a}^\mathbf{b}\exp{\left(-\frac{1}{2}(\mathbf{x}-\mathbf{a})^T\Sigma_0^{-1}(\mathbf{x}-\mathbf{a})\right)}\cdot \exp((\mathbf{c}^*)^T\Sigma_0^{-1}(\mathbf{x-a}))dx}{\int_\mathbf{a}^\mathbf{b}\exp{\left(-\frac{1}{2}(\mathbf{x}-\mathbf{a})^T\Sigma_0^{-1}(\mathbf{x}-\mathbf{a})\right)}dx},
\end{align*}
where $\Sigma^{-1}=\frac{1}{\sigma_{m,0}^2}I$. Now we use Theorem~\ref{thm:mean_value_thm}. There exists a $\bm{\delta}=(\delta_1,\dots,\delta_m)^T\in\mathbb{R}^m$ such that for each $i\in\{1,2,\dots,m\}$ we have $\delta_i\in[a_i,b_i]$ and get
\begin{align*}
    \Delta C_m(\sigma_{m,0},\mathbf{c}^*) &= \frac{\exp\left(-\frac{1}{2}(\mathbf{c}^*)^T\Sigma_0^{-1}\mathbf{c}^*\right)\cdot \exp((\mathbf{c}^*)^T\Sigma_0^{-1}(\mathbf{\bm{\delta}-a}))\int_\mathbf{a}^\mathbf{b}\exp{\left(-\frac{1}{2}(\mathbf{x}-\mathbf{a})^T\Sigma_0^{-1}(\mathbf{x}-\mathbf{a})\right)}dx}{\int_\mathbf{a}^\mathbf{b}\exp{\left(-\frac{1}{2}(\mathbf{x}-\mathbf{a})^T\Sigma_0^{-1}(\mathbf{x}-\mathbf{a})\right)}dx}\\
    &= \exp\left(-\frac{1}{2}(\mathbf{c}^*)^T\Sigma_0^{-1}\mathbf{c}^*\right)\cdot \exp((\mathbf{c}^*)^T\Sigma_0^{-1}(\mathbf{\bm{\delta}-a}))\\
    &=\exp\left(-\frac{1}{2\sigma_{m,0}^2}\left((\mathbf{c}^*)^T\mathbf{c}^*+(\mathbf{c}^*)^T(\bm{\delta}-\mathbf{a})\right)\right),
\end{align*}
where the last equation holds since $\Sigma^{-1}=\frac{1}{\sigma_{m,0}^2}I$. We now plug in $\sigma_{m,0}^2=\frac{\left[||\mathbf{b-a}||_2+\frac{\Delta Q}{2}\right]\Delta Q}{\epsilon}$, and  we have

\begin{align*}
    \Delta C_m(\sigma_{m,0},\mathbf{c}^*) &= \exp\left(\frac{2(\mathbf{c}^*)^T(\bm{\delta}-\mathbf{a})-(\mathbf{c}^*)^T\mathbf{c}^*}{2\left[||\mathbf{b-a}||_2+\frac{\Delta Q}{2}\right]\Delta Q}\epsilon\right)\\
    &= \exp\left(\frac{(\mathbf{c}^*)^T(2(\bm{\delta}-\mathbf{a})-\mathbf{c}^*)}{2\left[||\mathbf{b-a}||_2+\frac{\Delta Q}{2}\right]\Delta Q}\epsilon\right). 
\end{align*}

Using the the Cauchy–Schwarz inequality, we have
\begin{align*}
    \exp\left(\frac{(\mathbf{c}^*)^T(2(\bm{\delta}-\mathbf{a})-\mathbf{c}^*)}{2\left[||\mathbf{b-a}||_2+\frac{\Delta Q}{2}\right]\Delta Q}\epsilon\right)\leq \exp\left(\frac{||\mathbf{c}^*||_2||2(\bm{\delta}-\mathbf{a})-\mathbf{c}^*||_2}{2\left[||\mathbf{b-a}||_2+\frac{\Delta Q}{2}\right]\Delta Q}\epsilon\right).
\end{align*}

With $\delta_i\in(a_i,b_i)$, $||\mathbf{c}^*||_2\leq\Delta Q$, and the triangle inequality we have
\begin{align*}
    \exp\left(\frac{||\mathbf{c}^*||_2||2(\bm{\delta}-\mathbf{a})-\mathbf{c}^*||_2}{2\left[||\mathbf{b-a}||_2+\frac{\Delta Q}{2}\right]\Delta Q}\epsilon\right)&\leq \exp\left(\frac{\Delta Q(||2(\bm{\delta}-\mathbf{a})||_2+||\mathbf{c}^*||_2)}{2\left[||\mathbf{b-a}||_2+\frac{\Delta Q}{2}\right]\Delta Q}\epsilon\right)\\
    &\leq \exp\left(\frac{(||2(\mathbf{b}-\mathbf{a})||_2+\Delta Q)}{2\left[||\mathbf{b-a}||_2+\frac{\Delta Q}{2}\right]}\epsilon\right)\\
    &=\exp(\epsilon).
\end{align*}

\section{Proof of Lemma~\ref{lem:delta_c_greater_0_nd}}\label{apdx:proof_delta_c_greater_0_nd}
Since $\Sigma_0=\sigma_{m,0}^2I$, we can simplify $\Delta C_m(\sigma_{m,0},\mathbf{c}^*)$:
\begin{align}
    \Delta C_m(\sigma_{m,0},\mathbf{c}^*) &= \frac{\int_\mathbf{a}^\mathbf{b}\exp{\left(-\frac{1}{2}(\mathbf{x}-\mathbf{a}-\mathbf{c}^*)^T\Sigma_0^{-1}(\mathbf{x}-\mathbf{a}-\mathbf{c}^*)\right)}d\mathbf{x}}{\int_\mathbf{a}^\mathbf{b}\exp{\left(-\frac{1}{2}(\mathbf{x}-\mathbf{a})^T\Sigma_0^{-1}(\mathbf{x}-\mathbf{a})\right)}d\mathbf{x}}\nonumber\\
    &= \frac{\int_{a_1}^{b_1}\cdots \int_{a_m}^{b_m}\exp\left(-\frac{\sum_{i=1}^m(x_i-a_i-c_i^*)^2}{2\sigma_{m,0}^2}\right)dx_m\cdots dx_1}{\int_{a_1}^{b_1}\cdots \int_{a_m}^{b_m}\exp\left(-\frac{\sum_{i=1}^m(x_i-a_i)^2}{2\sigma_{m,0}^2}\right)dx_m\cdots dx_1}\nonumber\\
    &= \frac{\prod_{i=1}^m\int_{a_i}^{b_i}\exp\left(-\frac{(x_i-a_i-c_i^*)^2}{2\sigma_{m,0}^2}\right)dx_i}{\prod_{i=1}^m\int_{a_i}^{b_i}\exp\left(-\frac{(x_i-a_i)^2}{2\sigma_{m,0}^2}\right)dx_i}\nonumber\\
    &= \prod_{i=1}^m\frac{\int_{a_i}^{b_i}\exp\left(-\frac{(x_i-a_i-c_i^*)^2}{2\sigma_{m,0}^2}\right)dx_i}{\int_{a_i}^{b_i}\exp\left(-\frac{(x_i-a_i)^2}{2\sigma_{m,0}^2}\right)dx_i},\label{eq:delta_c_n_simplified}
\end{align}
where $\mathbf{c}=(c_1^*,\dots,c_m^*)^T$. By Appendix~\ref{apdx:proof_delta_c_greater_0}, we have for each $i\in\{1,2,\dots,m\}$,
\begin{equation*}
    \frac{\int_{a_i}^{b_i}\exp\left(-\frac{(x_i-a_i-c_i^*)^2}{2\sigma_{m,0}^2}\right)}{\int_{a_i}^{b_i}\exp\left(-\frac{(x_i-a_i)^2}{2\sigma_{m,0}^2}\right)}>1. 
\end{equation*}

Therefore $\Delta C_m(\sigma_{m,0},\mathbf{c}^*)>1$ and $\ln(\Delta C_m(\sigma_{m,0},\mathbf{c}^*))>0$.

\section{Proof of Lemma~\ref{lem:lemma_f_decreasing_nd}}\label{apdx:proof_lemma_f_decreasing_nd}
We take the derivative of $f_m$ at~$\sigma_m$ and have
\begin{align*}
    f_m'(\sigma_m)&=2\sigma_m - \left(\frac{\left[||\mathbf{b-a}||_2+\frac{\Delta Q}{2}\right]\Delta Q}{\epsilon-\ln(\Delta C_m(\sigma_m,\mathbf{c}^*))}\right)'\\
    &= 2\sigma_m - \frac{\left[||\mathbf{b-a}||_2+\frac{\Delta Q}{2}\right]\Delta Q}{(\epsilon-\ln(\Delta C_m(\sigma_m,\mathbf{c}^*))^2}\cdot \frac{1}{\Delta C_m(\sigma_m,\mathbf{c}^*)} \cdot (\Delta C_m(\sigma_m,\mathbf{c}^*))'\\
    &= 2\sigma_m + h_m(\sigma_m),
\end{align*}

where
\begin{align*}
    h_m(\sigma_m) = - \frac{\left[||\mathbf{b-a}||_2+\frac{\Delta Q}{2}\right]\Delta Q}{(\epsilon-\ln(\Delta C_m(\sigma_m,\mathbf{c}^*))^2}\cdot \frac{1}{\Delta C_m(\sigma_m,\mathbf{c}^*)} \cdot (\Delta C_m(\sigma_m,\mathbf{c}^*))'.
\end{align*}

Since $\Delta C_m(\sigma_m,\mathbf{c}^*)>0$ for all $\sigma_m\in(0,\infty]$, then if $(\Delta C_m(\sigma_m,\mathbf{c}^*))'<0$ we will have $h_m(\sigma_m)>0$. From Equation~\eqref{eq:delta_c_n_simplified} we have
\begin{align*}
    \Delta C_m(\sigma_m,\mathbf{c}^*) = \prod_{i=1}^m\frac{\int_{a_i}^{b_i}\exp\left(-\frac{(x_i-a_i-c_i^*)^2}{2\sigma_m^2}\right)}{\int_{a_i}^{b_i}\exp\left(-\frac{(x_i-a_i)^2}{2\sigma_m^2}\right)}.
\end{align*}
Then we have the derivative:
\begin{align*}
    \frac{\partial (\Delta C_m(\sigma_m,\mathbf{c}^*))}{\partial \sigma_m}=\sum_{i=1}^m\left[\left(\frac{\partial}{\partial \sigma_m}\frac{\int_{a_i}^{b_i}\exp\left(-\frac{(x_i-a_i-c_i^*)^2}{2\sigma_m^2}\right)}{\int_{a_i}^{b_i}\exp\left(-\frac{(x_i-a_i)^2}{2\sigma_m^2}\right)}\right)\cdot \prod_{j=1,j\neq i}^m\frac{\int_{a_j}^{b_j}\exp\left(-\frac{(x_j-a_j-c_j^*)^2}{2\sigma_m^2}\right)}{\int_{a_j}^{b_j}\exp\left(-\frac{(x_j-a_j)^2}{2\sigma_m^2}\right)}\right].
\end{align*}
From Appendix~\ref{apdx:proof_lemma_f_decreasing} we have, for each $i\in\{1,2,\dots,m\}$, 
\begin{align*}
    \frac{\partial}{\partial \sigma_m}\frac{\int_{a_i}^{b_i}\exp\left(-\frac{(x_i-a_i-c_i)^2}{2\sigma_m^2}\right)}{\int_{a_i}^{b_i}\exp\left(-\frac{(x_i-a_i)^2}{2\sigma_m^2}\right)} < 0.
\end{align*}
Therefore $\frac{\partial (\Delta C_m(\sigma_m))}{\partial \sigma_m}<0$ and $f_m'(\sigma_m)>0$.

\end{document}